\newcommand{\choosesecondifkeepwired}[2]{#2}
\newcommand{\omitproofs}[1]{#1}
\newcommand{\qed}{\hspace*{\fill}\medskip}
\newenvironment{descriptionsmallermargin} {
\begin{basedescript}{\desclabelwidth{.10in}\setlength{\topsep}{2pt}\setlength{\itemsep}{0pt}  } }
{\end{basedescript}}
\newcounter{foo}
\newenvironment{myenumerate}{\begin{list}{\arabic{foo}.}{
        \usecounter{foo}\setlength{\leftmargin}{0.25in}
        \setlength{\topsep}{1pt}
        \setlength{\itemsep}{0pt}
}}{ \end{list}}
\newenvironment{mylist}{\begin{list}{$\bullet$}
    {   \setlength{\itemsep}{0pt}
        \setlength{\topsep}{0pt}}
    }
{\end{list}}
\newcounter{examplectr}
\newenvironment{example}{\em {\noindent\bf \em Example\stepcounter{examplectr}
\theexamplectr:}}{}
\newcommand{\topic}[1]{\vspace{4pt} \noindent \underline{\bf #1}}
\newcommand{\commentout}[1]{}
\newcommand{\calX}{{\mathcal X}}
\newcommand{\calG}{{\mathcal G}}
\newcommand{\calM}{{\mathcal M}}
\newcommand{\calT}{\mathcal{T}}
\newcommand{\vcalT}{\vec{\mathcal{T}}}
\newcommand{\calF}{\mathcal{F}}
\newcommand{\TS}{\mathcal{TS}}
\newcommand{\dist}{\mathrm{d}}
\newcommand{\davg}{\mathrm{d}_{avg}}
\newcommand{\cost}{\mathtt{cost}}
\newcommand{\opt}{OPT}
\newcommand{\ropt}{\widetilde{OPT}}
\newenvironment{proof}{\noindent\textbf{Proof: }\ignorespaces}{}
\newcommand{\eat}[1] {}
\newcommand{\ignore}[1] {}
\newtheorem{Theorem}{Theorem}
\newtheorem{Lemma}{Lemma}
\newtheorem{Definition}{Definition}
\title{On Computing Compression Trees for Data Collection in \choosesecondifkeepwired{Wireless}{} Sensor Networks}
\author{
\begin{tabular}{ccccc}
    Jian Li & \mbox{\ \ \ \ } & Amol Deshpande &\mbox{\ \ \ \ } &  Samir Khuller\\[3pt]
\multicolumn{5}{c}{\{lijian, amol, samir\}@cs.umd.edu} \\[3pt]
\multicolumn{5}{c}{University of Maryland at College Park} \\
\end{tabular}
}
\begin{document}


\maketitle

\begin{abstract}
We address the problem of efficiently gathering correlated data from a wired or a wireless sensor network, with
the aim of designing algorithms with provable optimality guarantees, and understanding how close we can get
to the known theoretical lower bounds.
Our proposed approach is based on finding an optimal or a near-optimal {\em compression tree} for a given
sensor network:
a compression tree is a directed tree over the sensor network nodes such that the value of
a node is compressed using the value of its parent.
We consider this problem under different
communication models, including the {\em broadcast communication} model that enables many
new opportunities for energy-efficient data collection.
We draw connections between the data collection problem and
a previously studied graph concept, called {\em weakly connected dominating sets}, and we use this
to develop novel approximation algorithms for the problem. We
present comparative results on several synthetic and real-world datasets showing that our algorithms
construct near-optimal compression trees that yield a significant reduction in the data collection
cost.

\end{abstract}

\section{Introduction} 
\noindent{In} this paper, we address the problem of designing energy-efficient protocols
for collecting all data observed by the sensor nodes in a sensor
network at an Internet-connected base station, at a specified
frequency.
The key challenges in designing an energy-efficient data collection protocol are
effectively exploiting the strong spatio-temporal correlations present in most sensor networks, and
optimizing the routing plan for data movement.
In most sensor network deployments, especially
in environmental monitoring applications, the data generated by the sensor nodes
is highly correlated both in time (future values are correlated with
current values) and in space (two co-located sensors are strongly correlated).
These correlations can usually be captured
by constructing predictive models using either prior domain knowledge, or
historical data traces.
However, the distributed nature of data generation and the resource-constrained nature of the
sensor devices, make it a challenge to optimally exploit these correlations.


Consider an $n$-node sensor network, with node $i$ monitoring the value of a variable $X_i$, and
generating a data flow at entropy rate of $H(X_i)$.
In the naive protocol, data from each source is simply
sent to the base station through the shortest path, rendering
a total data transmission cost $\sum_{i} H(X_i) \cdot \dist(i,BS)$,
where $\dist(i,BS)$ is the length of a shortest path to the base station.
However, because of the strong spatial correlations among the $X_i$, the joint entropy of the
nodes, $H(X_1, \ldots, X_n)$, is typically much smaller than the sum of the individual entropies; the
naive protocol ignores these correlations.

A lower bound on the total number of bits that need to be communicated
can be computed using the {\em Distributed Source Coding (DSC) theorem}~\cite{SW1973,wyner,xiong,journals/tosn/Su07}.
In their seminal work, Slepian and Wolf~\cite{SW1973} prove that it is theoretically possible to encode
the correlated information generated by distributed data sources (in our case,
the sensor nodes) at the rate of their joint  entropy {\em even if  the data
sources do not communicate with each other}.
This can be translated into the following lower bound on the total amount of data transmitted for
a multi-hop network:
$\sum_{i} \dist(i, BS) \times H(X_i | X_1, \dots, X_{i-1})$
where $X_1,\ldots,X_n$ are sorted in an increasing
order by their distances to the base station~\cite{CBV2004,journals/tosn/Su07}.
With high spatial correlation, this number is expected to be much smaller than the
total cost for the naive protocol (i.e., $H(X_i | X_1, \dots, X_{i-1}) \ll H(X_i)$).
The DSC result unfortunately is non-constructive, with constructive techniques known for only a few
specific distributions~\cite{pradhan}; more importantly, DSC requires perfect knowledge of the correlations
among the nodes, and may return wrong answers if the observed data values deviate from what is expected.

However, the lower bound does suggest that significant
savings in total cost are possible by exploiting the correlations.
Pattem et al.~\cite{PKG04}, Chu et al.~\cite{CDHH06}, Cristescu et al.~\cite{cristescu}, among others, propose
practical data collection protocols that exploit the spatio-temporal correlations while
guaranteeing correctness (through explicit communication among the sensor nodes). These protocols may exploit only a subset of the correlations, and
in many cases, assume uniform entropies and conditional entropies. Further, most of this prior
work has not attempted to provide any approximation guarantees on the solutions, nor have they attempted a
rigorous analysis of how the performance of the proposed data collection protocol compares with the lower
bound suggested by DSC.

We are interested in understanding how to get as close to the DSC lower bound as possible for a given
sensor network and a given set of correlations among the sensor nodes.
In a recent work, Liu et al.~\cite{conf/mobicom/Liu06} considered a similar problem to ours and developed
an algorithm that performs very well compared to the DSC lower bound. However, their results are implicitly
based on the assumption that the conditional entropies are quite substantial compared to the base variable entropies
(specifically, that $H(X_i | X_1, ..., X_{i-1})$ is lower bounded). Our results here are complimentary in that,
we specifically target the case when the conditional entropies are close to zero (i.e., the correlations
are strong), and we are able to obtain approximation algorithms for that case. We note that we are also
able to prove that obtaining better approximation guarantees is NP-hard, so our results are tight for that
case. As we will see later, lower bounding conditional entropies enables us to get better approximation
results and further exploration of this remains a rich area of future work.

In this paper, we analyze the data collection problem under the restriction that any data collection protocol can directly
utilize only {\em second-order marginal or conditional} probability distributions -- in other words, we only
directly utilize pair-wise correlations between the sensor nodes.
There are several reasons for studying this problem. First off, the entropy function typically obeys a
strong diminishing returns property in that, utilizing higher-order distributions may not yield
significant benefits over using only second-order distributions. Second, learning, and utilizing,
second-order distributions is much easier than learning higher-order distributions (which can typically
require very high volumes of training data). Finally, we can theoretically analyze the problem of finding the optimal data collection scheme
under this restriction,
and we are able to develop polynomial-time approximation algorithms for solving it.

This restriction leads to what we call {\em compression trees}.
Generally speaking, a compression tree is simply a directed spanning tree $\calT$ of the communication network
in which the parents are used to compress the values of the children. More specifically, given a directed edge
$(u, v)$ in $\calT$, the value of $X_v$ is compressed using the value of $X_u$\footnote{In the rest of the
paper, we denote this by $X_v|X_u$} (i.e., we use the value of $X_u = x_u$ to compute the conditional distribution
$p(X_v | X_u = x_u)$ and use this distribution to compress the observed value of $X_v$ (using say Huffman coding)).
The compression tree also specifies a data movement scheme, specifying where (at which sensor node) and how the values of $X_u$ and $X_v$ are
collected for compression.

\begin{figure*}
    \hspace{-10pt}
    {\includegraphics[width=7.4in]{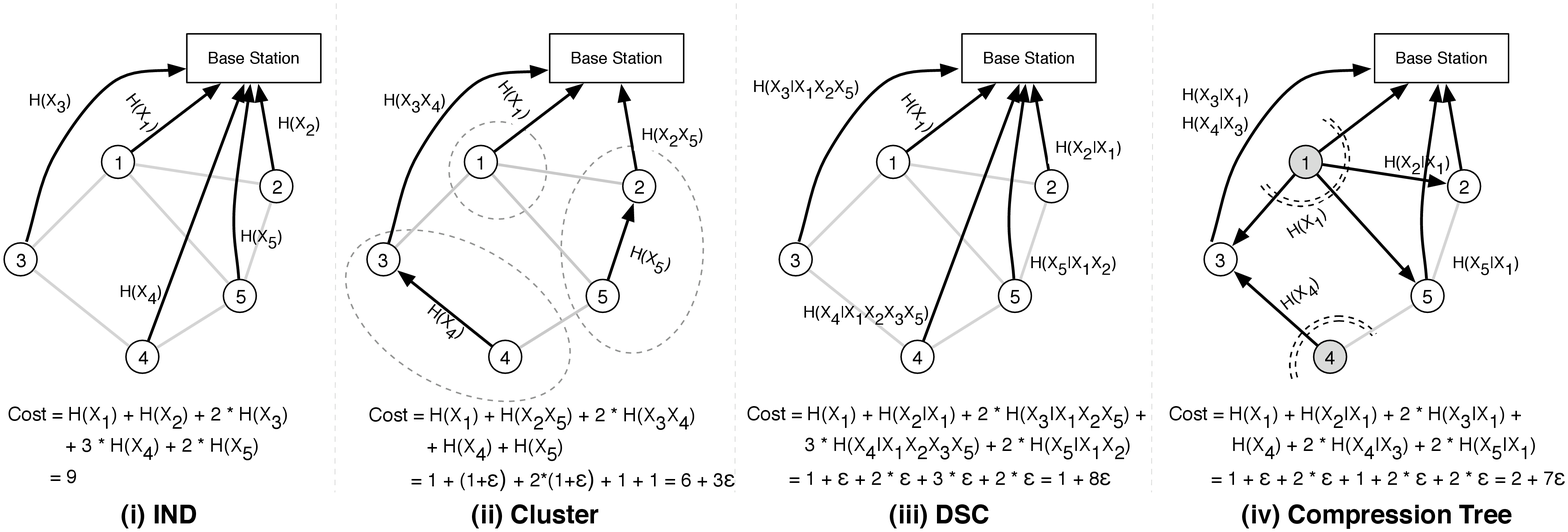}}
    \vspace{-20pt}
    \caption{Illustrating different data collection approaches -- costs computed assuming $H(X_i) = 1, H(X_i|X_j) = \epsilon, \forall~i,~j$:
    (i) IND: correlations ignored; (ii) Cluster: using 3 clusters $\{X_1\}, \{X_2, X_5\}, \{X_3, X_4\}$; (iii) DSC (theoretical optimal);  (iv) Compression tree: with edges $1 \rightarrow 2$, $1 \rightarrow 3$, $1 \rightarrow 5$
    and $3 \rightarrow 4$ (the cost under WN model would have been $5 + 7\epsilon$).
    }
    \label{fig:prior-approaches}
\end{figure*}

The compression tree-based approach can be seen as a special case of the approach
presented by one of the authors in prior work~\cite{wang}. There the authors
proposed using {\em decomposable models} for data collection in wireless sensor
networks, of which compression trees can be seen as a special case. However,
that work only presented heuristics for solving the problem, and did not present
any rigorous analysis or approximation guarantees.



\section{Problem Definition}
\label{sec:problemdefinition}
We begin by presenting preliminary background on data compression in
sensor networks, discuss the prior approaches, and then introduce
the compression tree-based approach.


\subsection{Notation and Preliminaries}
We are given a sensor network modeled as an undirected, edge-weighted graph $\calG_C(V = \{1, \cdots, n\},E)$,
comprising of $n$ nodes that are continuously
monitoring a set of distributed attributes $\calX = \{X_1, \cdots, X_n\}$.
The edge set $E$ consists of pairs of vertices that are within communication radius of each other,
with the edge weights denoting the communication costs.
Each attribute, $X_i$, observed by node $i$, may be an environmental
property being sensed by the node (e.g., {\em temperature}), or it may be the
result of an operation on the sensed values (e.g., in an anomaly-detection
application, the sensor node may continuously evaluate a filter such as
``$temp > 100$'' on the observed values).
If the sensed attributes are continuous, we assume that an error threshold
of $e$ is provided and
the readings are binned into intervals of size $2e$ to discretize them.
In this paper, we focus on optimal exploitation of spatial correlations at any
given time $t$; our approach can be generalized to handle temporal correlations
in a straightforward manner.

We are also provided with the entropy rate for each attribute, $H(X_i)~(1\leq i\leq n)$ and
the conditional entropy rates, $H(X_i|H_j)~(1\leq i,j\leq n)$, over all pairs of attributes.
More generally, we may be provided with a joint probability distribution, $p(X_1, ..., X_n)$,
over the attributes, using which we can compute the joint entropy rate for any subset of
attributes. However accurate computation of such joint entropies for large subsets of attributes is usually not feasible.



We denote the set of neighbors of the node $i$ by $N(i)$ and let
$\bar{N}(i)=N(i)\cup\{i\}$ and $\deg(i)=|N(i)|$. We denote by
$\dist(i, j)$ the energy cost of communicating one bit of
information along the shortest path between $i$ and $j$.

\choosesecondifkeepwired{
We focus on the wireless communication model (WL) in this paper; specifically we assume that when a
node transmits a message, all its neighbors can hear the message ({\em broadcast} model). We
further assume that the energy cost of receiving such a broadcast message is negligible,
and we only count the cost of transmitting the message.
In the extended version of the paper, we discuss how our approach generalizes to wired communication
networks, and to unicast or multicast models.
}{
We consider the following communication cost models in this paper. The data
movement schemes and how the costs are counted differ among
different models.
\begin{descriptionsmallermargin}
    \item[Wireless Network (WL)] In this model, when a node transmits a message,
    all its neighbors can hear the message ({\em broadcast} model). We further assume that
    the energy cost of receiving such a broadcast message is negligible, and we only count the cost of transmitting
    the message.
    If the unicast protocol is used, the network behaves as a wired network (see below).

\item[Wired Network(WN)] Here we assume point-to-point communication without any
broadcast functionality. Each communication link is
weighted, denoting the cost of transmitting one bit of message
through this link.
    \begin{descriptionsmallermargin}
    \item[Multicast]
    When a sender needs to communicate a piece of information to multiple receivers,
    we allow for sharing of transmissions. Namely, a message can be sent from the source
     to a set of terminals through a Steiner tree.
    \item[Unicast]
    Each communication is between two node (one sender and one receiver).
    Different message transmissions cannot be shared and
    the cost of each communication is counted separately.
    \end{descriptionsmallermargin}
%
\end{descriptionsmallermargin}
}

\subsection{Prior Approaches} 
\label{sec:prior-approaches}
\label{sec:dsc}
Given the entropy and the joint entropy rates for compressing the sensor network attributes,
the key issue with using them for data compression is that the values are generated
in a distributed fashion. The naive approach to using {\em all} the correlations in the
data is to gather the sensed values at a central sensor node, and compress them
jointly. However, even if the compression itself
was feasible, the data gathering cost would typically dwarf any advantages gained
by doing joint compression.
Prior research in this area has suggested several approaches that utilize a subset
of correlations instead. Several of these approaches are illustrated in Figure \ref{fig:prior-approaches}
using a simple 5-node sensor network.
\begin{descriptionsmallermargin}
    \item[IND] Each node compresses its own value, and sends it to the base station along the
    shortest path. The total communication cost is given by $\sum_i \dist(i,BS) \cdot H(X_i)$.
\item[Cluster] In this approach~\cite{PKG04,CDHH06}, the sensor nodes are
grouped into clusters, and the data from the nodes in each cluster
is gathered at a node (which may be different for different clusters) and is compressed jointly.
Figure \ref{fig:prior-approaches} (ii) shows an example of this using
three clusters $\{1\}, \{2, 5\}, \{3, 4\}$. Thus
the intra-cluster spatial correlations are exploited during compression; however, the correlations
across clusters are not utilized.

\item[Cristescu et al.~\cite{cristescu}] The approach proposed by Cristescu et al. is similar to ours,
    and also only uses second-order distributions.
    \choosesecondifkeepwired{
       However they only consider the unicast communication model, and further assume that the
       entropies and conditional entropies are uniform.
       Rickenbach et al.~\cite{Rickenbach04gatheringcorrelated} also present results under similar
       assumptions.
    }{
    They present algorithms for the WN case, further assuming
    that the entropies and conditional entropies are uniform. The solution space that we consider in this paper
    is larger that the one they consider, in that it allows more freedom in choosing the compression trees; in
    spite of that we are able to develop
    a PTIME algorithm for the problem they address (see Section \ref{sec_unicast})\footnote{However, they require
    all the communication to be along a tree; we don't require that from our solutions.}. Further, we make no
    uniformity assumptions about the entropies or the conditional entropies in that algorithm.
    }

\item[DSC] Distributed source coding (DSC), although not feasible in this setting for the reasons
discussed earlier, can be used to obtain a lower bound on total communication
cost as follows~\cite{CBV2004,cristescu,journals/tosn/Su07}. Let the sensor nodes be numbered
in increasing order by distances from the base station (i.e., for all $i$,
$\dist(i, BS) \le \dist(i+1, BS)$). The optimal scheme for using DSC is as follows:
$X_1$ is compressed by itself, and
transmitted directly to the sink (incurring a total cost of $\dist(1, BS) \times
H(X_1)$). Then, $X_2$ is compressed according to the conditional distribution of $X_2$
given the value of $X_1$,  resulting in a data flow rate of $H(X_2 | X_1)$ (since the sink
already has the value of $X_1$, it is able to decode according this
distribution). Note that, according to the distributed source coding
theorem~\cite{SW1973},
sensor node $2$ does not need to know the actual value of $X_1$. Similarly, $X_i$ is
compressed according to its conditional distribution given the values of $X_1, \dots, X_{i-1}$.
The total communication cost incurred by this scheme is given by: \\[2pt]
\centerline{$\sum_{i=1}^n \dist(i, BS) \times H(X_i | X_1, \dots, X_{i-1})$}
Figure \ref{fig:prior-approaches} (iii) shows this for our running example (note that $5$ is
closer to sink than $3$ or $4$).

\item[RDC] Several approaches where data is compressed along the way to the base station ({\em routing driven compression}~\cite{PKG04,SS02,GE03})
have also been suggested. These however require joint compression and decompression of large
numbers of data sources inside the network, and hence may not be suitable for resource-constrained sensor networks.

\item[Dominating Set-based] Kotidis~\cite{conf/icde/Kotidis05} and Gupta et al.~\cite{conf/mobihoc/GuptaNDC05}, among others,
    consider approaches based on using a representative set of sensor nodes to approximate the data distribution over the
entire network; these approaches however do not solve the problem of exact data
collection, and cannot provide correctness guarantees.
\end{descriptionsmallermargin}

\noindent{As} we can see in Figure \ref{fig:prior-approaches}, if the spatial correlation is high, both IND and Cluster
incur much higher communication costs than DSC. For example, if $H(X_i) = 1,
\forall i$, and if $H(X_i | X_j)  = \epsilon \approx 0, \forall i, j$ (i.e., if the spatial
correlations are almost perfect), the total communication costs of IND, Cluster
(as shown in the figure), and DSC would be $9, 6$, and $1$ respectively.



\subsection{Compression Trees}
\eat{
In practice, we are likely to be limited to using only low-order marginal or conditional probability
distributions for compression in sensor networks. There are several reasons for this. In most cases,
a collection of interacting lower-order probability distributions are usually expressive enough to
capture complex correlations among a large number of variables (e.g., using a Bayesian or a Markov network).
The entropy function is known to be submodular, and typically exhibits a strong diminishing returns property.
Second, it is typically not possible to learn accurate joint probability distributions over
more than a few variables -- that not requires large training datasets, but is also prone to
overfitting. Finally, compression using higher-order joint distributions not only entails higher communication
among the sensor nodes, but it may not be feasible on the typically resource-constrained sensor devices.
}


As discussed in the introduction, in practice, we are likely to be limited to using only low-order marginal or conditional probability
distributions for compression in sensor networks.
In this paper, we begin a formal analysis of such algorithms by analyzing the problem of optimally exploiting the spatial correlations under
the restriction that we can only use second-order conditional distributions (i.e., two-variable
probability distributions).
A feasible solution under this restriction is fully specified by a
directed spanning tree $\calT$ rooted at $r$ (called a {\em compression tree})
and a data movement scheme according to $\calT$.
In particular, the compression tree indicates which of the second-order distributions are to be used,
and the data movement scheme specifies an actual plan to implement it.

More formally, let $p(i)$ denote the parent of $i$ in $\calT$. This indicates that both
$X_i$ and $X_{p(i)}$ should be gathered together at some common sensor node, and that $X_i$
should be compressed using its conditional probability distribution given the value
of $X_{p(i)}$ (i.e., $p(X_i | X_{p(i)} = x_{p(i)})$). The compressed value is communicated to the
base station along the shortest path, resulting in an entropy rate of $H(X_i | X_{p(i)})$.
Finally, the root of the tree, $r$, sends it own value directly to the base station, resulting
in an entropy rate of $H(X_r)$. It is easy to see that the base station can reconstruct
all the values. The data movement plan specifies how the values of $X_i$ and $X_{p(i)}$ are
collected together for all $i$.

In this paper, we address the optimization problem of finding the optimal compression tree
that minimizes the total communication cost, for a given communication
topology and a given probability distribution over the sensor network variables (or the entropy rates for all variables, and
the joint entropy rates for all pairs of variables).

We note that the notion of compression trees is quite similar to the so-called {\em Chow-Liu trees}~\cite{chowliu},
used for approximating large joint probability distributions.

\vspace{2pt}
\begin{example}
Figure \ref{fig:prior-approaches} (iv) shows the process of collecting data using a compression tree
for our running example, under the broadcast communication model.
The compression tree (not explicitly shown) consists of four edges: $1 \rightarrow 2$, $1 \rightarrow 3$, $1 \rightarrow 5$ and $3 \rightarrow 4$.
The data collections steps are:
\begin{myenumerate}
\item Sensor nodes $1$ and $4$ broadcast their values, using $H(X_1)$ and $H(X_4)$ bits
respectively. The Base Station receives the value of $X_1$ in this step.
\item Sensor nodes $2$, $3$, and $5$ receive the value of $X_1$, and compress their own values
using the conditional distributions given $X_1$. Each of them sends the compressed
values to the base station along the shortest path.
\item Sensor node $3$ also receives the value of $X_4$, and it compresses $X_4$ using its
own value. It sends the compressed value (at an entropy rate of $H(X_4|X_3)$) to the base
station along the shortest path.
\end{myenumerate}
The total (expected) communication cost is thus given by: \\[2pt]
        \centerline{$H(X_1) + H(X_4) + H(X_2 | X_1) + 2 \times H(X_3 | X_1) +$}\\[2pt]
        \centerline{~~~~~~~~~~~~~~~~~~~~~~~~~$2 \times H(X_5 | X_1) + 2 \times H(X_4 | X_3)$}





If the conditional entropies are very low, as is usually the case, the total
cost will be simply $H(X_1) + H(X_4)$.
\end{example}



\subsection{Compression Quality of a Solution} 
To analyze and compare the quality of the solutions with the DSC approach,
   we subdivide the total communication cost incurred by a data
   collection approach into two parts:

\begin{descriptionsmallermargin}
\item[Necessary Communication (NC)] 
As discussed above, for practical reasons, data collection schemes typically use a
subset of the correlations present in the data (e.g. Cluster only uses intra-cluster
correlations, our approach only uses second-order joint distributions). Given the
specific set of correlations utilized by an approach, there is a minimum amount
of communication that will be incurred during data collection. This cost
is obtained by computing the DSC cost assuming only those correlations are present in the
data. For a specific compression tree, the NC cost is computed as: \\[2pt]
\centerline{$H(X_r) \times \dist(r, BS) + \sum_{i\in V} H(X_i|X_{p(i)})\times \dist(i, BS)$}

The NC cost for the Cluster solution shown in Figure \ref{fig:prior-approaches}(ii)
is $4 + 5 \epsilon$, computed as: \\[2pt]
\centerline{\small $H(X_1) + H(X_2) + 2 \cdot H(X_5|X_2) + 2 \cdot H(X_3) + 3 \cdot H(X_4 | X_3)$}

In some sense, NC cost measures the penalty of ignoring some of the correlations during
compression. For Cluster, this is typically quite high -- compare to the NC cost for DSC ($= 1 + 8 \epsilon$).
On the other hand, the NC cost for the solution in Figure \ref{fig:prior-approaches} (iv) is
$1 + 8\epsilon$ (i.e., it is equal to the NC cost of DSC -- we note that this is an
artifact of having uniform conditional entropies, and does not always hold).

\item[Intra-source Communication (IC)] This measures the cost of explicitly gathering the
data together as required for joint compression. By definition, this cost is 0 for DSC.
We compute this by subtracting the NC cost from the total cost.
For the solutions presented in Figures \ref{fig:prior-approaches} (ii) and (iv),
the IC cost is $2 - 2\epsilon$ and $1-\epsilon$ respectively. The broadcast communication model
significantly helps in reducing this cost for our approach.

\end{descriptionsmallermargin}

\noindent{The} key advantage of our compression tree-based approach is that its NC cost is usually quite close to
DSC, whereas the other approaches, such as Cluster, can have very high NC costs because they ignore
a large portion of the correlations.


\subsection{Solution Space}
In our optimization algorithms, we consider searching among two different classes of compression trees.
\begin{mylist}
\item {\em Subgraphs of $\calG$ (SG):} Here we require that the compression tree be a subgraph
of the communication graph. In other words, we compress $X_i$ using $X_j$ only if $i$ and $j$
are neighbors.
\item {\em No restrictions (NS):} Here we don't put any restrictions on the compression trees.
As expected, searching through this solution space is much harder than SG.
\end{mylist}
In general, we expect to find the optimal solution in the SG solution space; this is because
the correlations are likely to be stronger among neighboring sensor nodes than among sensor
nodes that are far away from each other.


Finally, we define $\beta$
as the {\em bounded conditional entropy parameter}, which
bounds the ratio of conditional entropies for any pair of variables
that can be used to compress each other.
Formally, ${1\over \beta}\leq {H(X_i|X_j)\over H(X_j|X_i)}\leq \beta$
for any nodes $i$ and $j$ and some constant $\beta\geq 1$.
For the $SG$ problem, this is taken over pairs of adjacent nodes
and for the $NS$ problem, it is taken over all pairs.
Moreover, the above property
implies that the ratio of entropies between any pair of nodes is also bounded,
${1\over \beta}\leq {H(X_i)\over H(X_j)}\leq \beta$.



We expect $\beta$ to be quite small ($\approx 1$) in most cases (especially
if we restrict our search space to SG). Note that, if the entropies are uniform ($H(X_i) = H(X_j)$), then $\beta = 1$.

\subsection{Summary of Our Results}
\choosesecondifkeepwired{
We refer to the two problems that we focus on in this paper by WL-SG (where compression trees are
restricted to be subgraphs of $\calG$), and WL-NS (no restrictions on compression trees).
Below we summarize our key results.
}{
Combining the distinct communication models and different solution
spaces, we get four different problems that we consider in this
paper: (1) WL-SG, (2) WL-NS, (3) Multicast-NS, and (4) Unicast
(which subsumes Multicast-SG). We summarize the results as follows.
}

\begin{enumerate}
\item(Section \ref{subsec_wcds})
    We first consider the WL-SG problem under an {\em uniform entropy and conditional entropy assumption},
    i.e.,  we assume that $H(X_i) = 1~\forall i$ and $H(X_i | X_j) = \epsilon ~\forall i, j, i \ne j$.
    We develop
a $\left({1\over 1+2\epsilon (\davg-1/2)}(H_\Delta+1)+2
\right)$-approximation for this problem,
where $\davg$ is the average distance to the base station. 
\item(Section \ref{subsec_framework} and \ref{subsec_analysis})
We develop a unified generic greedy framework which can be used for
approximating the problem under various communication cost models.
\item(Section \ref{subsec_wlsg} and \ref{subsec_wlns})
    We show that, for wireless communication model,
the greedy framework gives a $4\beta^2 H_n$ approximation factor
for the SG solution space and and an $O(\beta^3 n^\epsilon\log n)$
(for any $\epsilon>0$) factor for the NS solution space.
\choosesecondifkeepwired{}{
\item(Section \ref{subsec_wire_notsubgraph})
For multicast-NS problem, we show that the greedy framework gives an
$O({\beta^3\over \epsilon}(\log n)^{3+\epsilon})$ (for any
$\epsilon>0$) approximation.
\item(Section \ref{sec_unicast})
For the unicast communication model, we present a simple poly-time algorithm for finding
an optimal {\em restricted} solution (defined in Section \ref{subsec_framework}), giving us a $(2+\beta)$-approximation.
Further, we show that the optimal restricted solution is also the optimal solution under uniform
entropy and conditional entropies assumption.
}
\item (Section \ref{sec:experiment}) We illustrate through an empirical evaluation
that our approach usually leads to very good
data collection schemes in presence of strong correlations. In many
cases, the solution found by our approach performs nearly as well as
the theoretical lower bound given by DSC.
\end{enumerate}

\section{Approximation Algorithms}
\label{sec:approxalgo}
We first present an approximation algorithm for the WL-SG problem under
the uniform entropy assumption; this will help us tie the problem with some
previously studied graph problems, and will also form the basis for our main algorithms.
We then present a generic greedy framework that we use to derive approximation
algorithms for the remaining problems.

\ignore{
\subsection{The Structure of the Solutions}

In general, a feasible solution is fully specified by a compression tree $\calT$
and a data movement scheme $\calM_\calT$ according to $\calT$.
The compression tree $\calT$ is simply a tree rooted at the base station ($BS$),
spanning all vertices of $\calG$.
Note that $\calT$ may or may not be a subgraph of $\calG$ depending
on users' requirement.
The data movement scheme $\calM_\calT$ consists of two parts as follows:
\begin{enumerate}
\item Raw information movement $RI(\calM_\calT)$.
Let $T_{i}$ be the set of nodes that receive the raw information $X_i$ of node $i$.
In all our models, $T_i$ is connected and includes node $i$.
However, the cost $c(T_{i})$ for sending $X_i$ from $i$ to all nodes in $T_{i}$ differs
in different models.
\item Conditional information movement $CI(\calM_\calT)$.
Node $i$ being the parent of node $j$ in $\calT$ means that $T_{i}\cap T_{j}\ne \emptyset$
and  we compress $X_j$ by conditioning it on $X_i$
at some node in the intersection, and then send the conditional data
$X_j|X_i$ to $BS$ along a shortest path.
Let $I_{i,j}$ be the place where $X_j|X_i$ is computed.
\end{enumerate}
We say that a data movement scheme that
satisfies the above condition {\em implements} the compression tree $\calT$.
When we refer to a solution for the problem, we refer to
a compression tree and a data movement scheme implementing it and
the cost of the solution is the cost of the data movement scheme.

}

\subsection{The WL-SG Model: Uniform Entropy and Conditional Entropy Assumption}
\label{subsec_wcds}
Without loss of generality, we assume that $H(X_i) = 1,\ \ \forall i$ and
$H(X_i|X_j) = \epsilon \ \ \forall i, j$, for all adjacent pairs of nodes
$(X_i, X_j)$. We expect that typically $\epsilon \ll 1$.

For any compression tree that satisfies the SG property, the
data movement scheme must have a subset of the sensor nodes
locally broadcast their (compressed) values, such that
for every edge $(u, v)$ in the compression tree, either $u$ or $v$
(or both) broadcast their values. (If this is not true, then it is
not possible to compress $X_v$ using $X_u$.) Let $S$ denote this subset of nodes.
Each of the remaining nodes only transmits $\epsilon$
bits of information. 


To ensure that the base station can reconstruct all the values, $S$ must further
satisfy the following properties:
(1) $S$ must form a dominating set of $\calG_C$ (any node $\notin S$ must have a neighbor in $S$).
(2) The graph formed by deleting all edges $(x,y)$ where
$x,y \in V \setminus S$ is connected.
Property (1) implies every node should get at least one of its neighbors' message for compression
and property (2) guarantees the connectedness of the  compression tree given $S$ broadcast.
Graph-theoretically this leads to a slightly different problem
than both the classical Dominating Set (DS) and Connected Dominating Set
(CDS) problems\cite{journals/algorithmica/GuhaK98}.
Specifically, $S$ must be a
Weakly Connected Dominating Set (WCDS)~\cite{Mobihoc2002_chenliestman} of $\calG_C$.

In the network shown in Figure~\ref{fig:prediction-example},
nodes $4, 3, 9$ and $10$ form a WCDS, and thus locally broadcasting
them can give us a valid compression tree (shown in Figure \ref{fig:prediction-example} (ii)).  However, note that
nodes $4, 9, 10$ and $2$ form a DS but not a WCDS. As a result, we cannot
form a compression tree with these nodes performing local broadcasts (there
would be no way to reconstruct the value of both $X_3$ and $X_2$).


\begin{figure*}
\begin{minipage}{0.48\linewidth}
{
{\includegraphics[scale=0.45]{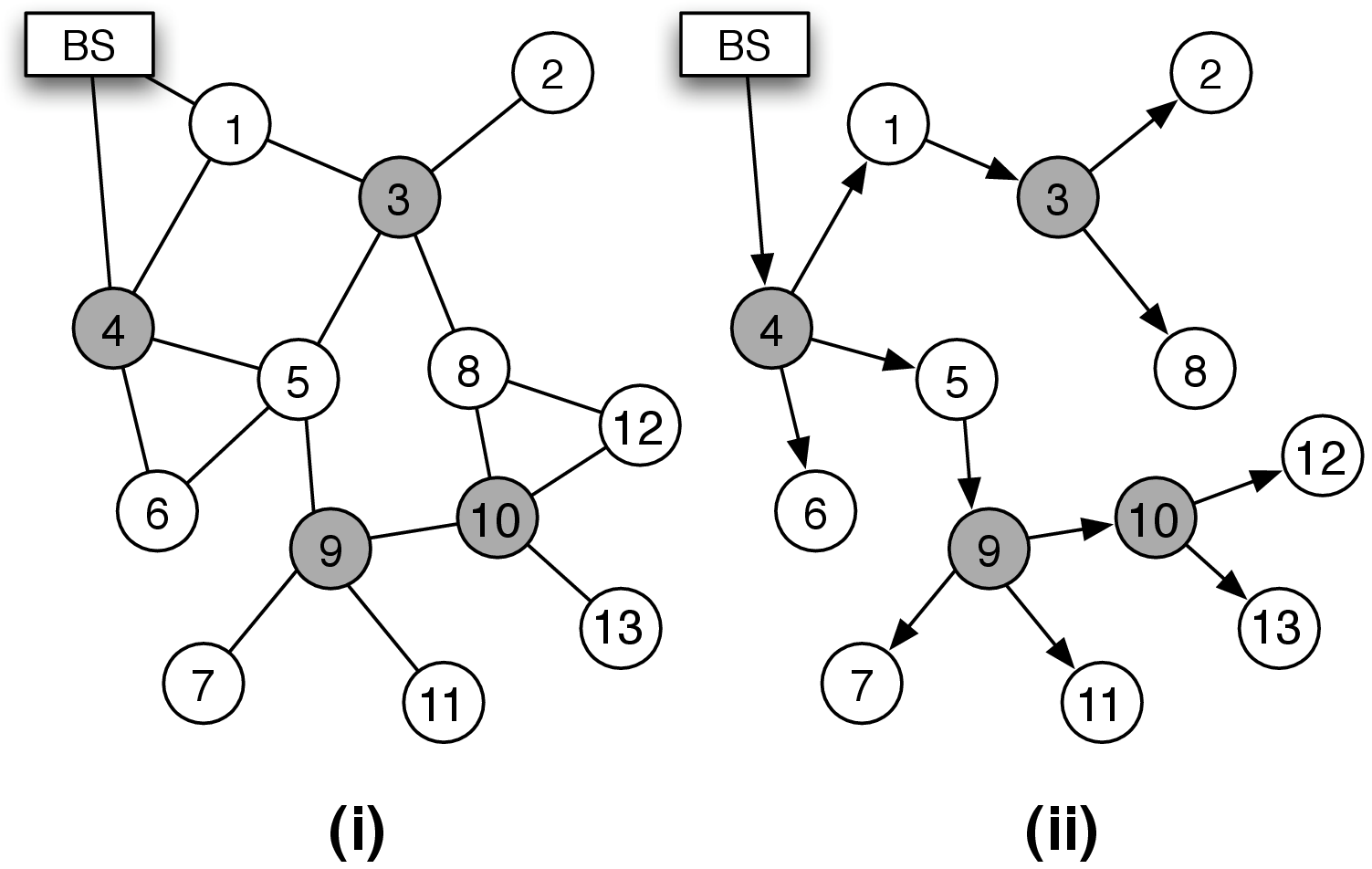}}
\caption{(i) A weakly connected dominating set of the sensor network is
indicated by the shaded nodes, which locally broadcast their values; (ii) The corresponding
compression tree (e.g. Node 3 is compressed using the value of Node 1 at Node 1, whereas
Node 5 is compressed using the value of Node 4 at Node 5).}
\label{fig:prediction-example}
}
\end{minipage}
~ ~ ~
\begin{minipage}{0.48\linewidth}
{
{\includegraphics[scale=0.45]{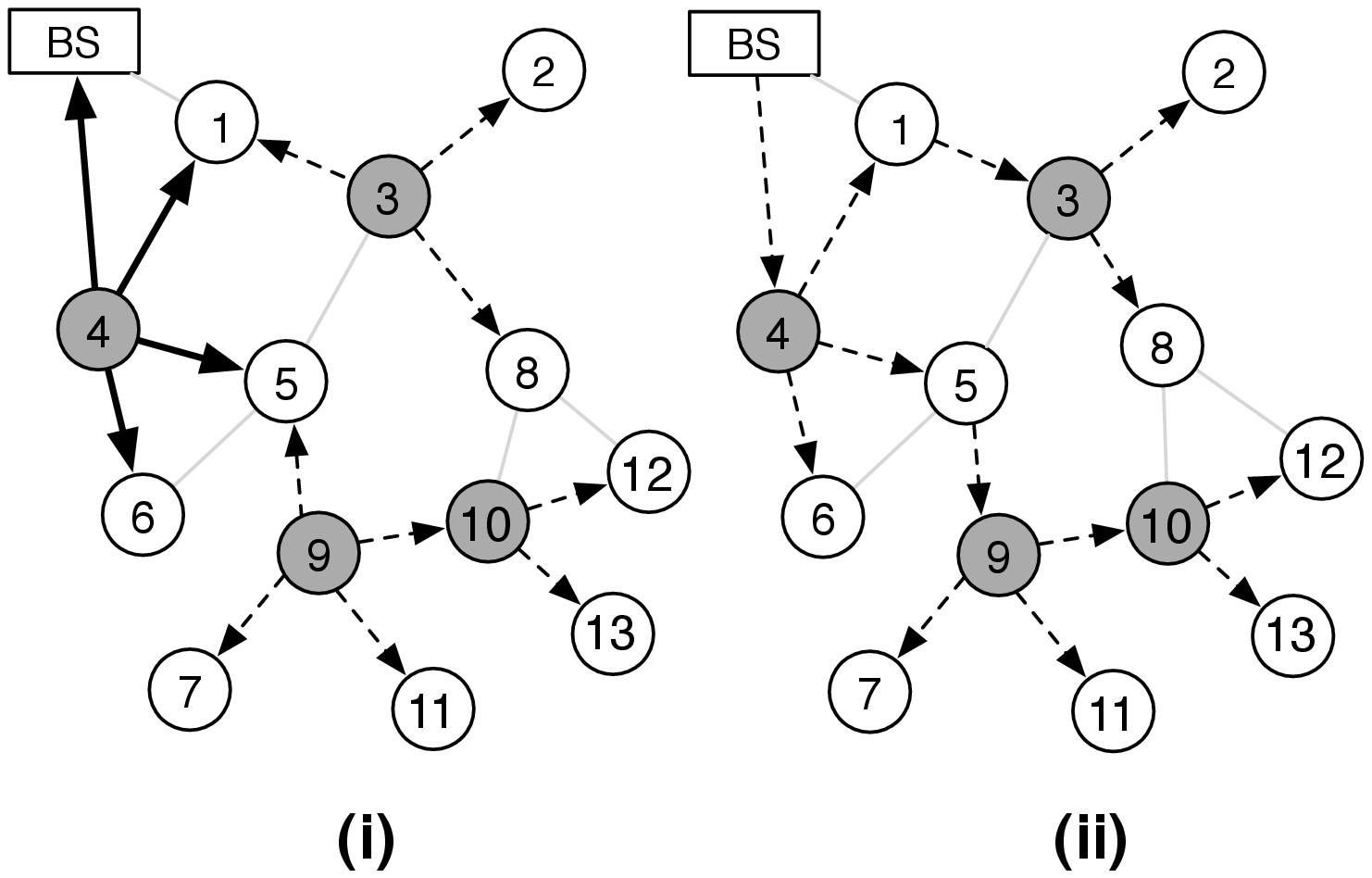}}
    \caption{Illustrating the Treestar algorithm: First the treestars centered at
nodes 10, 9 and 3 are chosen, and finally the treestar centered at node
4 is chosen. This causes the parents of nodes 1 and 5 to be re-defined as node 4,
the parent of node 9 to be defined as node 5, and the parent of node 3 to
be defined as node 1. (i) also shows an extended compression tree.}
\label{fig:greedy2}
}
\end{minipage}

\end{figure*}

The approach for the CDS problem that gives a $2H_\Delta$
approximation \cite{journals/algorithmica/GuhaK98}, gives a
$H_\Delta+1$ approximation\footnote{$\Delta$ is the maximum
degree and $H_n$ is the $n$th harmonic number, i.e,  $H_n=\sum_{i=1}^n {1\over i}$.}
for WCDS \cite{Mobihoc2002_chenliestman}. We use this to
prove that:

\begin{Theorem}
Let the average distance to the base station be $\davg={\sum_j \dist(j,BS)\over n}$.
The approximation for WCSD yields a $\left({1\over 1+2\epsilon (\davg-1/2)}(H_\Delta+1)+2 \right)$-approximation
for WL-SG problem under uniform entropy and conditional entropies assumption.
\end{Theorem}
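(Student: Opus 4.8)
The plan is to build directly on the reduction already established: under the uniform-entropy assumption a feasible WL-SG solution is exactly a choice of a broadcasting set $S$ that is a weakly connected dominating set (WCDS) of $\calG_C$, together with a compression tree realizing it. First I would make the cost decomposition precise. Since $H(X_i)=1$ and $H(X_i|X_j)=\epsilon$, the total cost splits into a \emph{broadcast} part and a \emph{delivery} part: every node of $S$ performs one local broadcast of unit size, contributing $\approx|S|$, while every non-root node ships its compressed value to the base station at rate $\epsilon$, contributing $\approx\epsilon\sum_{i}\dist(i,BS)$. Two points need care here: the root, whose single broadcast doubles as the first hop of its own delivery (so it is charged once, $\dist(r,BS)$, not twice), and the exact node at which each conditional is evaluated, since in the SG model a child and its parent are adjacent and the child may be compressed at whichever endpoint is closer to $BS$, shortening its delivery distance.

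The theorem then follows from two bounds whose ratio is the stated factor. For the lower bound I would show $\cost(\opt)\ge w^{\ast}\bigl(1+2\epsilon(\davg-1/2)\bigr)$, where $w^{\ast}$ is the size of a minimum WCDS: the broadcasting set of \emph{any} feasible solution is a WCDS, so its broadcast part alone costs at least $w^{\ast}$, and the mandatory $\epsilon$-rate delivery of the remaining values supplies the additive $2\epsilon w^{\ast}(\davg-1/2)$ term once the $BS$-distances are summed and expressed through $\davg=\tfrac1n\sum_j\dist(j,BS)$. For the upper bound I would invoke the known $(H_\Delta+1)$-approximation for WCDS to obtain a broadcasting set $S$ with $|S|\le (H_\Delta+1)\,w^{\ast}$, build a compression tree on it by compressing each node at itself, and bound the resulting delivery part by twice that of $\opt$. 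This yields $\cost(S)\le (H_\Delta+1)\,w^{\ast}+2\cost(\opt)$, and dividing by the lower bound gives $\dfrac{H_\Delta+1}{1+2\epsilon(\davg-1/2)}+2$.

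The hard part is controlling the delivery term across different solutions. The broadcast part reduces transparently to the WCDS guarantee, but the delivery cost depends on the whole compression tree and on where each conditional is evaluated, so $\opt$ and the algorithm's solution need not pay the same delivery cost. The key quantitative fact I would establish is that, because the $BS$-distances of two adjacent nodes differ by at most one hop, compressing a node at itself rather than at its closest available neighbour can at most double its delivery distance (the ratio $d/(d-1)$ is bounded by $2$ for $d\ge 1$). This is exactly what produces the additive ``$+2$'' and lets me bound the algorithm's delivery by $2\cost(\opt)$. Nailing down both this factor-two comparison and the $\davg$-dependent delivery lower bound—together with the clean treatment of the root and the matching of the summed distances against $w^{\ast}$—is the delicate step; everything else is bookkeeping layered on top of the WCDS approximation.
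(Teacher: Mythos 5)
Your proposal follows essentially the same route as the paper: split the cost into a broadcast part and an $\epsilon$-rate delivery part, lower-bound $\cost(\opt)$ by $|S^*|\bigl(1+2\epsilon(\davg-1/2)\bigr)$ via $|S^*|\le n/2$, charge the broadcast part to the $(H_\Delta+1)$-approximate WCDS, and absorb the delivery part into $2\,\cost(\opt)$. The one place you diverge is the justification of the additive $+2$: your pointwise claim that shifting the compression point by one hop ``at most doubles'' the delivery distance (ratio $d/(d-1)$, which you assert is bounded by $2$ for $d\ge 1$) fails at $d=1$, and for a broadcast node the relevant comparison is a delivery distance of up to $d+1$ in your solution against as little as $d-1$ in the optimum, which is not pointwise bounded by a factor of $2$. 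The paper instead bounds the delivery term in aggregate, $\epsilon\bigl(\sum_j \dist(j,BS)+|S|\bigr)\le \epsilon n(\davg+1)\le 2\epsilon n\,\davg\le 2\,\cost(\opt)$ using $\davg\ge 1$; substituting this aggregate step for your pointwise one makes your argument go through unchanged.
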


\omitproofs{
\begin{proof}
The amount of data broadcast is clearly $|S|$ (since $H(X_i) = 1$
for $i \in S$). Each non-broadcast node $j$ sends $\epsilon$ amount of data
to BS -- the cost of this is $\epsilon \dist(j,BS)$.
For each broadcast node $j$, $\epsilon$ amount of data may be sent from $p(j)$
and the cost is at most $\epsilon (\dist(j,BS)+1)$ and at least
$\epsilon (\dist(j,BS)-1)$.
The total communication cost is thus at most $UB=|S|+\epsilon (\sum_j \dist(j,BS)+|S|)$.
In an optimal solution, suppose $S^*$ denotes the set of nodes that perform local broadcast;
then the lower bound on the total cost is: $LB=|S^*| + \epsilon (\sum_j d(j,BS)-|S^*|)$.
We can also easily see $|S^*|\leq n/2$, thus
$|S^*|\leq {1\over 1+2\epsilon(\davg-1/2)}LB$.
Therefore,
\begin{eqnarray*}
UB
&\leq& (H_\Delta+1)|S^*| + \epsilon (\davg+1)n
\leq  (H_\Delta+1)|S^*| + 2\epsilon \davg n \\
&\leq& {(H_\Delta+1) \over 1+2\epsilon (\davg -1/2)}LB + 2 LB
\leq \left({1\over 1+2\epsilon (d_{avg}-1/2)}(H_\Delta+1)+2
\right) LB.
\end{eqnarray*}
\qed
\end{proof}
}

From the above theorem, if $\epsilon$ is small enough, say $\epsilon=o({1\over \davg})$,
the approximation ratio is approximately $H_\Delta)$.
On the other hand, if $\epsilon$ is large, the approximation ratio becomes better.
Specifically, if $\epsilon \approx H_\Delta/\davg$, then we get a constant approximation.
This matches our intuition that the hardness of approximation comes mainly from
the case when the correlations are very strong. We can further formalize this --
by a standard reduction from the set cover problem
which is hard to approximate within a factor of $(1 - \delta)\ln n$
for any $\delta>0$~\cite{feige98}, we can prove:

\begin{Theorem}
The WL-SG problem can not be approximated
within a factor of $(1-\delta)\ln n$ for any $\delta>0$ even with uniform entropy and
conditional entropy, unless $NP\subseteq DTIME(n^{\log\log n})$.
\end{Theorem}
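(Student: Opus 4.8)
The plan is to exhibit a gap-preserving reduction from \textsc{Set Cover}, building on the characterization established above: under the uniform assumption a feasible WL-SG solution is determined by a broadcast set $S$ that must be a weakly connected dominating set (WCDS) of $\calG_C$, and the earlier cost analysis shows that a solution with broadcast set $S$ costs between $|S| + \epsilon(D - |S|)$ and $|S| + \epsilon(D + |S|)$, where $D = \sum_v \dist(v, BS)$ is fixed for the instance. I would take $\epsilon$ polynomially small, say $\epsilon = n^{-3}$, so that $\epsilon(D+n) = o(1)$; then the cost of any solution agrees with $|S|$ up to an additive $o(1)$, and since $|S|$ is an integer, approximating the WL-SG cost becomes essentially equivalent to approximating the minimum WCDS.

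Given a \textsc{Set Cover} instance with universe $U$ ($|U| = N$) and family $\calS$ ($|\calS| = M$), I would construct $\calG_C$ with one \emph{set-node} $a_j$ per $S_j$, one \emph{element-node} $b_i$ per $u_i$, an edge $a_j b_i$ whenever $u_i \in S_j$, and a single hub $c$ adjacent to every $a_j$ and to the base station; all entropies are set to $1$ and all conditional entropies to $\epsilon$. First I would show that a cover $C$ yields the WCDS $S = \{c\} \cup \{a_j : S_j \in C\}$: the hub dominates and weakly connects all set-nodes, while the covering property guarantees every element-node is dominated and attached through a retained edge to a selected set-node. This gives a feasible solution of cost at most $|C| + 1 + o(1)$.

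Conversely, from any WCDS $S$ I would read off a cover by taking $S_j$ for each $a_j \in S$ and, for each selected element-node $b_i \in S$, one arbitrary set containing $u_i$. Because an element-node's only neighbors are the set-nodes that contain it, domination of every $b_i$ forces this collection to cover $U$, and its size is at most $|S|$. Hence the minimum WCDS size lies between $\opt_{\mathrm{SC}}$ and $\opt_{\mathrm{SC}} + 1$, so a $(1-\delta)\ln n$-approximation for WL-SG would, after rounding away the $o(1)$ cost perturbation and subtracting the additive $1$, produce a cover within a $(1-\delta)\ln n$ factor of optimal. Feeding this into Feige's inapproximability theorem for \textsc{Set Cover} yields the claimed conclusion under $NP \not\subseteq DTIME(n^{\log\log n})$.

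The step I expect to be the main obstacle is keeping the logarithmic threshold tight, rather than the mechanics of the reduction. The construction inflates the ground set from $N$ elements to $n = N + M + 1$ nodes, so naively $\ln n$ can exceed $\ln N$ by a constant factor and dilute the $(1-\delta)$ guarantee. The delicate part is therefore to start from a hard \textsc{Set Cover} instance whose number of sets is near-linear in $N$, so that $n = N^{1+o(1)}$ and $\ln n = (1+o(1))\ln N$, and then to absorb both the additive $+1$ shift and this change of parameter into a renamed $\delta$. By contrast, the weak-connectivity requirement---usually the annoying part of dominating-set reductions---is handled transparently by the single hub $c$, which simultaneously dominates the set-nodes and keeps the surviving subgraph connected.
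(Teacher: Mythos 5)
Your reduction is essentially the approach the paper intends: it explicitly states that the theorem follows ``by a standard reduction from the set cover problem,'' and omits all details, which your write-up correctly supplies (including the two points the paper glosses over: scaling $\epsilon$ polynomially small so the objective tracks $|S|$, and using hard \textsc{Set Cover} instances with $\ln(N+M+1)=(1+o(1))\ln N$ so the logarithmic threshold survives). Both directions of your hub-based construction check out against the paper's WCDS characterization and its upper/lower cost bounds, so the argument is correct and matches the paper's route.
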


\subsection{The Generic Greedy Framework}
\label{subsec_framework}
We next present a generic greedy framework that helps us analyze the rest of
the problems.


Suppose node $p(i)$ is the parent of node $i$ in the compression tree $\calT$.
Let $I_{i,p(i)}$ denote the node where $X_i$ is compressed using $X_{p(i)}$.
We note that this is not required to be $i$ or $j$, and could be any node
in the network. This makes the analysis of the algorithms very hard. Hence
we focus on the set of feasible solutions of the following restricted form:
$I_{i,j}$ is either node $i$ or $j$. The following lemma states that the cost of the optimal restricted solution
is close to the optimal cost.

\begin{Lemma}
\label{lm_wire_restrict} Let the optimal solution be $\opt$ and the
optimal restricted solution be $\ropt$. We have $\cost(\ropt)\leq
(2+\beta)\cost(\opt)$. Furthermore, for WL-SG model,
$\cost(\ropt)\leq 2\cost(\opt)$.
\end{Lemma}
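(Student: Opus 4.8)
The plan is to prove the bound by the standard route of exhibiting a single restricted solution whose cost is within the claimed factor of $\cost(\opt)$; since $\ropt$ is optimal over all restricted solutions, this immediately gives $\cost(\ropt)\le(2+\beta)\cost(\opt)$. Concretely, I would start from an optimal (possibly unrestricted) solution, keep its compression tree $\calT$ unchanged, and only relocate each compression operation: for the edge $(p(i),i)$ whose compression OPT performs at the interior node $I_{i,p(i)}$, I move that compression to one of the two endpoints (I would try the child $i$ first, and fall back to the parent $p(i)$ when that is cheaper). The resulting solution is restricted by construction, so the whole argument reduces to bounding the cost blow-up caused by these relocations.

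For the accounting I would split $\cost(\opt)$ into its raw-information part (delivering each $X_i$ from $i$ to the nodes that use it) and its conditional-information part (sending each compressed $X_i\mid X_{p(i)}$ from its production point to the base station). Relocating the compression of edge $(p(i),i)$ from $I_{i,p(i)}$ to the endpoint $i$ has two effects. First, the conditional message now leaves from $i$ instead of $I_{i,p(i)}$; by the triangle inequality $\dist(i,BS)\le\dist(i,I_{i,p(i)})+\dist(I_{i,p(i)},BS)$, so its extra cost is at most $H(X_i\mid X_{p(i)})\,\dist(i,I_{i,p(i)})$, which I would charge against OPT's own raw delivery of $X_i$ to $I_{i,p(i)}$ using $H(X_i\mid X_{p(i)})\le H(X_i)$. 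Second, $i$ now needs $X_{p(i)}$ locally; I would route $X_{p(i)}$ to $i$ by piggybacking on the path OPT already used to bring $X_i$ to $I_{i,p(i)}$ (reusing that path but carrying the parent's value), and here the entropy swap $H(X_{p(i)})\le\beta\,H(X_i)$ is exactly what produces the factor $\beta$; the residual triangle term contributes the second unit, giving the overall $2+\beta$.

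The hard part will be that the raw-movement trees are \emph{shared}: a single broadcast or delivery of $X_i$ simultaneously serves the compression of $i$ against its parent and the compressions of all of $i$'s children against $i$. Consequently, a naive edge-by-edge charging would count the same portion of OPT's raw cost several times and would not yield a valid bounded factor. The real work is to set up the charging so that each unit of OPT's cost is used a bounded number of times --- the cleanest way I see is to charge every relocation term either to the (unshared) conditional message of the corresponding edge or to the unique segment of $X_i$'s delivery that terminates at $I_{i,p(i)}$, and to verify that no segment is charged more than the claimed multiplicity. I would also need a short exchange argument justifying that choosing the better of the two endpoints never does worse than the fixed ``compress-at-child'' rule used in the bound.

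Finally, for the WL-SG refinement to factor $2$, I would specialize to the broadcast model with the tree a subgraph of $\calG_C$, so the two endpoints of every compression edge are adjacent. There a node's single broadcast of $X_i$ already reaches every neighbor, so moving a compression to an adjacent endpoint requires no new raw transmission beyond what OPT essentially pays, and the entropy-swap step disappears entirely. The only unavoidable loss is the triangle-inequality reroute of the conditional message to the endpoint, which costs at most a factor of $2$; I would check that the broadcast saved at the relocated interior node absorbs the remaining slack, yielding $\cost(\ropt)\le 2\cost(\opt)$.
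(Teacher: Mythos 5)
Your proposal matches the paper's proof essentially step for step: keep $\opt$'s compression tree, relocate each compression $X_i\mid X_{p(i)}$ from $I_{i,p(i)}$ to the child $i$, extend the parent's raw-data delivery along the path from $I_{i,p(i)}$ to $i$ (the entropy swap $H(X_{p(i)})\le\beta H(X_i)$ producing the $\beta$), and bound the rerouted conditional message via the triangle inequality and $H(X_i\mid X_{p(i)})\le H(X_i)$ --- and the double-counting worry you raise resolves exactly as you suggest, since each charge goes to the child's own delivery segment and each node is the child of exactly one tree edge. The WL-SG specialization is also the paper's: adjacency of compression-tree endpoints means the parent's broadcast already reaches the child, eliminating the $\beta$ term and leaving only the factor-$2$ reroute.
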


\omitproofs{
\begin{proof}
Let $\calT^*$ be the compression tree of $\opt$.
We keep the compression tree unchanged and only modify the data movement scheme
$\opt$ to construct a restricted solution $\ropt$ whose cost is at most
$(2+\beta)\cost(\opt)$.
Assume that $i$ is the parent of $j$ in $\calT^*$ and $X_j|X_i$ is computed at some node $I_{i,j}$.
We denote by $T_i$ the set of nodes which receive $X_i$ from $i$.
We simply extend $T_{i}$ to be $\tilde{T}_{i}=T_{i}\cup P(I_{i,j},j)$ where
$P(u,v)$ is the shortest path from $u$ to $v$. Then $X_j|X_i$ is computed on
node $j$ and then sent to the base station.

Now, we analyze the increase in cost for the wired network model.
The proof for the wireless network case is almost the same and we omit it here.
Let $p(i)$ be the parent and $Ch(i)$ be the set of children of node $i$ in $\calT^*$.

\begin{eqnarray*}
\cost(\opt)&=& \sum_{i\in \calT^*}H(X_i)c(T_i)
+\sum_{i\in \calT^*\setminus\{BS\}}H(X_i|X_{p(i)})\dist(I_{i,p(i)},BS).
\end{eqnarray*}

Thus, we have:

\begin{eqnarray*}
 \cost(\ropt)  &=& \sum_{i\in \calT^*}H(X_i)c(\tilde{T}_{i})+\sum_{i\in \calT^*\setminus\{BS\}}H(X_i|X_{p(i)})\dist(i,BS) \\
          &\leq& \sum_{i\in \calT^*}H(X_i)\left(c(T_i)+\sum_{j\in Ch(i)} \dist(I_{j,i},j)\right)
               +\sum_{i\in \calT^*\setminus\{BS\}}H(X_i|X_{p(i)})
                 \left(\dist(I_{i,p(i)},i)+\dist(I_{i,p(i)},BS)\right)\\
          &\le&    \cost(\opt)+\beta \sum_{i\in \calT^*}H(X_i)\dist(I_{i,p(i)},i)
              + \sum_{i \in \calT^*} H(X_i|X_{p(i)}) \dist(I_{i,p(i)},i) \\
          &\leq& \cost(\opt)+(1+\beta)\sum_{i\in \calT^*}c(T_i) \leq (2+\beta)\cost(\opt).
\end{eqnarray*}

For the WL-SG model, the only reason that $I_{i,p(i)}$ is neither $i$ nor $p(i)$
is that both $i$ and $p(i)$ broadcast their values to the third node $I_{i,p(i)}$
which is closer to the base station.
The above analysis can be still carried over except we don't need any extra intra-source communication.
Then, we don't have the $\beta$ term in the formula and it gives us a ratio of $2$.
\qed

\end{proof}
}

Our algorithm finds what we call an {\em extended compression tree}, which in a final
step is converted to a compression tree. An {\em extended compression tree} $\vcalT$ corresponding to
a compression tree $\calT$ has the same underlying tree structure,
but each edge $e(i,j)\in \calT$ is associated with an
{\em orientation} specifying the raw data movement.
Basically, an extended compression tree naturally suggests a
restricted solution in which an edge from $i$ to $j$ in $\vcalT$
implies that $i$ ships its raw data to $j$ and the corresponding
compression is carried out at $j$. 
We note that the direction of the edges in $\vcalT$ may not be the same as
in $\calT$ where edges are always oriented from the root to the leaves,
irrespective of the data movement.
In the following, we refer {\em the parent of } node $i$ to be the parent in $\calT$,
i.e, the node one hop closer to the root, denoted by $p(i)$.

The main algorithm greedily constructs an extended compression tree
by greedily choosing subtrees to merge in iterations.
We start with a empty graph $\calF_1$ that consists of only isolated nodes.
During the execution, we maintain a forest in which each edge is directed.
In each iteration, we combine some trees together into a new larger tree
by choosing the most (or approximately) cost-effective
{\em treestar} (defined later).
Let the forest at the start of the $i$th iteration be $\calF_i$.
A treestar $\TS$ is specified by $k$ trees in $\calF_i$,
say $T_1,\ldots,T_k$, a node $r\notin T_j (1\leq j\leq k)$ and $k$ directed edges $e_j=(r,v_j) (v_j\in T_j,1\leq j\leq k)$
We call $r$ the center, $T_1,\ldots,T_k$ the {\em leaf-trees}, $e_j$ the {\em leaf-edges}.
The treestar $\TS$ is a specification of the data movement of $X_r$, which
we will explain in detail shortly.
Once a treestar is chosen, the corresponding data movement is
added to our solution.
The algorithm terminates when only one tree is left which will be
our extended compression tree $\vcalT$.

Let $r$ be the center of $\TS$ and $S$ be the subset indices of leaf-trees.
We define the cost of $\TS$ ($\cost(\TS)$) to be
$$
\min_{v_j\in T_j, j\in S} ( c(r,\{v_j\}_{j\in S})H(X_r)  + \sum_{j\in S} H(X_{v_j}|X_r)\dist(v_j,BS))
$$
where $c(r,\{v_j\}_{j\in S})$ is the minimum cost for sending $X_r$ from $r$ to all $v_j$'s.
Essentially, the first term corresponds roughly to the cost of
intra-source communication (raw data movement of $X_r$), denoted $IC(\TS)$
and the second roughly to the necessary communication (conditional data movement), denoted $NC(\TS)$.
We say that the corresponding data movement is an {\em implementation} of the treestar.
The cost function $c()$ differs
for different cost models of the problem; we will specify its concrete form later.

We define the {\em cost effectiveness} of the treestar $\TS$
to be $ceff(\TS)={\cost(\TS)\over k+1}$ where $k$ is the number of leaf-trees in $\TS$.
In each iteration, we will try to find the most cost effective
treestar.
Let {\em Mce-Treestar($\calF_i$)} be the procedure for finding the
most (or approximately) cost effective treestar on $\calF_i$.
The actual implementation of the procedure {\em Mce-Treestar} will be described in detail
in the discussion of each cost model.
In some cases, finding the most cost-effective treestar
is NP-hard and we can only approximate it.

\eat{
\begin{figure}[t]
    \vspace{-5pt}
    \centerline{\includegraphics[width=3.6in]{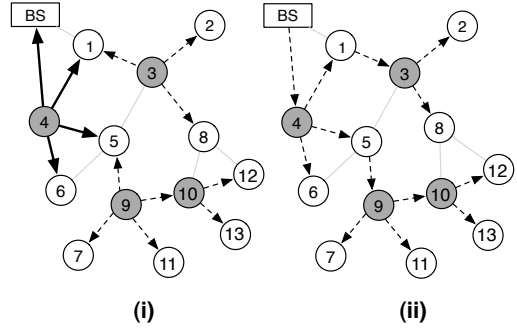}}
    \vspace{-5pt}
    \caption{Illustrating the Treestar algorithm: First the treestars centered at
nodes 10, 9 and 3 are chosen, and finally the treestar centered at node
4 is chosen. This causes the parents of nodes 1 and 5 to be re-defined as node 4,
the parent of node 9 to be defined as node 5, and the parent of node 3 to
be defined as node 1.}
    \label{fig:greedy2}
\end{figure}
}

We now discuss the final data movement
scheme and how the cost of the final solution has been properly accounted
in the treestars that were chosen.
Suppose in some iteration, a treestar $\TS$ is chosen in which the center node $r$
sends its raw information to each $v_j (v_j\in T_j, j\in S)$ ($S$
is the set of indices of leaf-trees in $\TS$).
The definition of the cost function suggests that
$X_{v_j}$ is compressed using $X_r$ at $v_j$,
and the result is sent from $v_j$ to BS.
However, this may not be consistent with the
extended compression tree $\vcalT$.
In other words, some $v_j$ may later become the parent of $r$, due to latter
treestars being chosen,
in $\vcalT$ which implies that $r$ should be compressed using $v_j$
instead of the other way around.
Suppose some leaf $v_p (v_p\in T_p, p\in S)$ is the parent of $r$ in $\vcalT$.
The actual data movement scheme is determined as follows.
We keep the raw data movement induced by $\TS$ unchanged, i.e, $r$ still sends $X_r$ to each $v_j (j\in S)$.
But now, $X_r|X_{v_p}$ instead of $X_{v_p}|X_r$ is computed
on node $v_p$ and sent to the base station.
Other leaves $v_j (j\ne p)$ still compute and send $X_{v_j}|X_r$.
It is easy to check this data movement scheme
actually implements the extended compression tree $\vcalT$.

For instance, in Figure~\ref{fig:greedy2}, 
node 3 is initially the parent of node 1, but later node 4 becomes the
parent of node 1, and in fact node 1 ships $X_1|X_4$ to the base
station (and not $X_1|X_3$). Node 1 now being the parent of node 3
also compresses $X_3$ and sends $X_3|X_1$ to $BS$.
Due to the fact that ${1\over \beta}\leq {H(X|Y)\over H(Y|X)}\leq \beta$,
the actual data movement cost is at most $\beta$ times the sum of the treestar costs.
Thus every part of the communication cost incurred is counted in some treestar.
We formalize the above observations as the following lemma:
\begin{Lemma}
Let $\TS_i$ be the treestars we choose in iteration $i$ for $1\leq i\leq \ell$. Then:
$\cost(\calT)\leq \beta\sum_{i=1}^\ell \cost(\TS_i) $.
\end{Lemma}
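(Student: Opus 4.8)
The plan is to charge each unit of communication paid by the final scheme to the treestar that produced it, and to argue treestar-by-treestar that the cost actually paid is at most $\beta$ times the quantity $\cost(\TS_i)$ used when the treestar was selected. The finalization step described just before the lemma leaves the raw data movement of every chosen treestar untouched and reorients only the conditional compressions, so this charging is a genuine partition of $\cost(\calT)$ across $\TS_1,\dots,\TS_\ell$: each raw data movement is the broadcast performed by some treestar's center and is charged to that treestar, and each edge of $\vcalT$ is charged to the unique treestar that created it. It then suffices to prove $(\star)$: the cost charged to $\TS_i$ is at most $\beta\,\cost(\TS_i)$, after which summing over $i$ gives the lemma.

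To establish $(\star)$, fix $\TS_i$ with center $r$ and leaf-endpoints $\{v_j\}_{j\in S}$, and recall that $\cost(\TS_i)=IC(\TS_i)+NC(\TS_i)$, where $IC(\TS_i)=c(r,\{v_j\})H(X_r)$ and $NC(\TS_i)=\sum_{j\in S}H(X_{v_j}\mid X_r)\dist(v_j,BS)$. The intra-source term is paid exactly: $r$ still ships its raw value $X_r$ to every $v_j$, contributing precisely $IC(\TS_i)$. For the conditional term, consider each leaf-edge $(r,v_j)$ in the rooted tree $\calT$. If $r$ is the parent of $v_j$, then $X_{v_j}\mid X_r$ is computed at $v_j$ and sent to $BS$ at cost $H(X_{v_j}\mid X_r)\dist(v_j,BS)$, matching the corresponding term of $NC(\TS_i)$ exactly.

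The only source of discrepancy is a leaf $v_p$ that becomes the \emph{parent} of $r$ in $\calT$. Because $r$ has a single parent, at most one of the $k$ leaf-edges of $\TS_i$ is reoriented in this way (namely the leaf-edge, if any, lying on the path from $r$ toward the root). For that edge the scheme instead computes $X_r\mid X_{v_p}$ at $v_p$ and ships it to $BS$, paying $H(X_r\mid X_{v_p})\dist(v_p,BS)$ in place of $H(X_{v_p}\mid X_r)\dist(v_p,BS)$. By the bounded-conditional-entropy hypothesis $\tfrac1\beta\le H(X_r\mid X_{v_p})/H(X_{v_p}\mid X_r)\le\beta$, this replacement costs at most $\beta$ times the term it replaces, while every unreoriented term is at most $\beta$ times itself since $\beta\ge1$. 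Hence the conditional cost charged to $\TS_i$ is at most $\beta\,NC(\TS_i)$, and combined with the exact $IC(\TS_i)\le\beta\,IC(\TS_i)$ this bounds the cost charged to $\TS_i$ by $\beta(IC(\TS_i)+NC(\TS_i))=\beta\,\cost(\TS_i)$, proving $(\star)$.

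The step I expect to require the most care is the bookkeeping behind the claimed partition: one must check that the finalization procedure touches each edge of $\vcalT$ exactly once and never double-counts or drops a transmission — in particular that the root's own raw transmission to $BS$ is subsumed (it is carried by the raw movement of the treestar centered at the root, whose broadcast already reaches $BS$) and that reorientations forced by treestars chosen in later iterations are each absorbed by the single-flip argument above. Once this accounting is verified, the per-treestar bound $(\star)$ together with the summation over $i$ completes the proof.
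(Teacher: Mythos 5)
Your proposal is correct and takes essentially the same route as the paper, whose justification for this lemma is the discussion immediately preceding it: the raw data movement of each chosen treestar is kept intact, at most one leaf-edge per treestar is reoriented (the one whose leaf becomes the parent of the center), and the swapped conditional term $H(X_r\mid X_{v_p})\dist(v_p,BS)$ is bounded by $\beta$ times the original via the bounded-conditional-entropy assumption. Your explicit per-treestar charging and the single-flip observation are exactly the "observations" the paper formalizes, just written out more carefully.
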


The pseudocode for constructing $\vcalT$ and the corresponding
communication scheme is given in Algorithm \ref{alg_generic}.


\begin{algorithm}[t]
\caption{The Generic Greedy Framework}
\label{alg_generic}
 $\calF_1=\bigcup_{i=1}^n \{\{X_i\}\}$\;
 $ i \rightarrow 1$\;
 \While{ $\calF_i$ is not a spanning tree} {
       $\TS_i=Mce-Treestar(\calF_i)$\;
       Let $E(\TS_i)$ be tree-edges of $TS_i$ and $r$ is the center of $\TS_i$\;
       $\calF_{i+1}\leftarrow \calF_i+E(\TS_i)$\;
       $T_r\leftarrow T_r+IC(\TS_i)$\;
       $i=i+1$;
 }
 $\vcalT=\calF_i$\;
 \For{each directed edge $e(i,j)\in E(\vcalT)$}{
       \eIf{$i$ is the parent of $j$} {
              Compute $X_j|X_i$ at $j$ and send it to $BS$;
       }{
              Compute $X_i|X_j$ at $j$ and send it to $BS$;
       }
 }
\end{algorithm}


\subsection{The Generic Analysis Framework}
\label{subsec_analysis}
Let $\calF_i$ be the forest of
$n_i$ trees before iteration $i$ and $\vcalT$ be the final extended compression tree.
$OPT$ is defined as the optimal restricted solution and $OPT_i$ as the optimal solution
for the following problem:
Find a extended compression tree that contains $\calF_i$ as a subgraph
such that the cost for implementing all treestars in $\vcalT-F_i$ is minimized.
Clearly, $OPT_1=OPT$.
Let $\TS_i$ be the treestar computed in iteration $i$, with $m_i$ tree
components (the number of leaf-trees of $\TS_i$ plus one).
After $\ell$ iterations (it is easy to see
$\ell$ must be smaller than $n$), the algorithm terminates.
It is easy to see $n_{i+1}=n_i-m_i+1$ for $i=1,\ldots,\ell-1$.
We assume {\em Mce-Treestar} is guaranteed to find an
$\alpha$-approximate most cost-effective treestar.

\begin{Lemma}
For all $i\ge 1$, ${cost(\TS_i)\over m_i} \le \alpha{cost(OPT_i)\over n_i}$.
\end{Lemma}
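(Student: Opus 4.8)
The plan is to prove the inequality by an averaging (mediant-inequality) argument over a carefully chosen decomposition of the optimal extension $OPT_i$ into treestars that are feasible on the current forest $\calF_i$. Recall that $OPT_i$ is an extended compression tree containing $\calF_i$; contracting each of the $n_i$ trees of $\calF_i$ to a single super-node turns the part of $OPT_i$ lying outside $\calF_i$ into a tree on $n_i$ super-nodes, and hence into exactly $n_i-1$ directed (raw-data) edges. First I would partition these $n_i-1$ edges according to their common tail: for each node $r$ that ships raw data along at least one extension edge, collect all extension edges $(r,v_1),\dots,(r,v_k)$ leaving $r$ into a single candidate treestar $\TS'$ centered at $r$. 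Since $OPT_i$ is a tree, no two of these edges can enter the same tree of $\calF_i$ (that would close a cycle), and no $v_j$ lies in the tree of $r$ (the edges are extension edges); so the targets lie in $k$ distinct leaf-trees and $\TS'$ is a legal treestar on $\calF_i$.

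Two bookkeeping facts drive the averaging. For the sizes, each such treestar merges $m'=k+1$ components (its $k$ leaf-trees plus the tree containing its center), and since the candidate treestars partition all $n_i-1$ extension edges among $t\ge 1$ distinct centers, $\sum m' = (n_i-1)+t \ge n_i$. For the costs, I would charge the two terms of $cost(\TS')$ separately: the raw-data term $c(r,\{v_j\})H(X_r)$ is at most $c(T_r)H(X_r)$, the amount $OPT_i$ already spends sending $X_r$ to the set $T_r\supseteq\{v_j\}$ (the minimization in the definition of $cost(\TS')$ only helps), and distinct centers are distinct nodes, so these terms charge disjointly; the conditional term $\sum_j H(X_{v_j}|X_r)\dist(v_j,BS)$ is charged to the distinct extension edges, each used exactly once. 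Summing over all candidate treestars then gives $\sum cost(\TS') \le cost(OPT_i)$.

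With $\sum cost(\TS')\le cost(OPT_i)$ and $\sum m' \ge n_i$, the mediant inequality produces a single treestar $\TS'$ with $cost(\TS')/m' \le \big(\sum cost(\TS')\big)/\big(\sum m'\big) \le cost(OPT_i)/n_i$. Hence the most cost-effective treestar feasible on $\calF_i$ has cost-effectiveness at most $cost(OPT_i)/n_i$, and since \emph{Mce-Treestar} returns an $\alpha$-approximation to it, the chosen treestar satisfies $cost(\TS_i)/m_i = ceff(\TS_i) \le \alpha\cdot cost(OPT_i)/n_i$, which is the claim. I expect the main obstacle to be exactly the cost-accounting step: I must check that regrouping $OPT_i$'s extension edges by their raw-data source — rather than following $OPT_i$'s own construction, whose later treestars are feasible only on intermediate forests and not on $\calF_i$ — yields treestars feasible on $\calF_i$ that charge disjointly against $cost(OPT_i)$. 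In particular I must confirm that the parent/child orientation in the compression tree is absorbed by the treestar-cost convention so that no extra factor of $\beta$ enters here (that factor is accounted for in the preceding lemma), and that the minimization over leaf placements in the definition of $cost(\TS')$ never pushes the decomposition cost above $cost(OPT_i)$.
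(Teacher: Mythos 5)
Your proposal is correct and follows essentially the same route as the paper's proof: decompose the extension $OPT_i\setminus\calF_i$ into edge-disjoint stars grouped by their raw-data source, observe that the component counts sum to at least $n_i$ while the costs sum to at most $\cost(OPT_i)$, and apply the averaging (mediant) inequality together with the $\alpha$-approximation guarantee of \emph{Mce-Treestar}. Your write-up merely makes explicit the feasibility and disjoint-charging details that the paper's proof asserts without elaboration.
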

\omitproofs{
\begin{proof}
Suppose the extended compression tree for $OPT_i$ is $\vcalT_i$
that has $\calF_i$ as a subgraph.
$OPT_i$ consists of all data movement which implements
all treestars defined by $\vcalT_i-\calF_i$.
These treestars, say $\TS^*_1,\TS^*_2,\ldots$,
correspond to edge disjoint stars in $\vcalT_i$.
Suppose $\TS^*_j$ connects $m^*_j$ tree components.
Since each tree component of $\calF_i$ is involved in some $\TS^*_i$,
we can see $\sum_j m_j^*\geq n_i$.
By the fact that $\TS_i$ is a $\alpha$-approximation of the most effective treestar,
we can get
\begin{eqnarray*}
{\cost(\TS_i)\over m_i} &\leq& \alpha\min_{j}\left\{{\cost(\TS^*_j)\over m^*_j}\right\}
\leq \alpha{\sum_j\cost(\TS^*_j)\over \sum_{j} m^*_j} \\
&\leq& \alpha{\cost(OPT_i)\over n_i}.
\end{eqnarray*}
\qed
\end{proof}
}

The proof of the following lemma is omitted.
\begin{Lemma}
$\cost(OPT_i)\leq \cost(OPT)$.
\end{Lemma}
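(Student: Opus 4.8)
The plan is to exhibit, inside the optimal ``build-from-scratch'' solution $OPT$, a ready-made way of finishing off the partially built forest $\calF_i$, so that completing $\calF_i$ costs no more than $\cost(OPT)$. Let $\vcalT^*$ be the extended compression tree realizing the optimal restricted solution $OPT = OPT_1$, and let $\TS^*_1,\TS^*_2,\ldots$ be the edge-disjoint treestars (as established in the analysis of the greedy framework) whose union is $\vcalT^* - \calF_1 = \vcalT^*$, so that $\cost(OPT) = \sum_j \cost(\TS^*_j)$.

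First I would produce a feasible extended compression tree for the $OPT_i$ problem out of the edges of $\vcalT^*$. Since $\vcalT^*$ is a spanning tree, the graph $\calF_i \cup \vcalT^*$ is connected, so the forest $\calF_i$ can be extended to a spanning tree $\vcalT'$ by adding only edges drawn from $\vcalT^*$. Then $\vcalT'$ contains $\calF_i$ as a subgraph and $\vcalT' - \calF_i \subseteq \vcalT^*$, so $\vcalT'$ is a feasible solution for $OPT_i$; it remains to bound the cost of the treestars needed to implement $\vcalT' - \calF_i$.

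Next I would decompose $\vcalT' - \calF_i$ along the treestars of $\vcalT^*$. Because the $\TS^*_j$ are edge-disjoint and every edge of $\vcalT' - \calF_i$ lies in $\vcalT^*$, each such edge belongs to a unique $\TS^*_j$; for each $j$ let $\widehat{\TS}_j$ be the treestar with the same center as $\TS^*_j$ but retaining only those leaf-edges that survive in $\vcalT' - \calF_i$. These $\widehat{\TS}_j$ form a treestar decomposition of $\vcalT' - \calF_i$, so $\cost(OPT_i) \le \sum_j \cost(\widehat{\TS}_j)$.

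The key inequality, and the step I expect to require the most care, is the monotonicity $\cost(\widehat{\TS}_j) \le \cost(\TS^*_j)$. Recalling $\cost(\TS) = c(r,\{v_j\})H(X_r) + \sum_j H(X_{v_j}|X_r)\dist(v_j,BS)$, deleting leaf-edges only drops nonnegative terms from the necessary-communication sum and only shrinks the destination set in the intra-source term, and sending $X_r$ to a subset of destinations never costs more than sending it to all of them; this monotonicity of $c(\cdot)$ must be checked for each cost model (broadcast, Steiner/multicast, unicast), which is where any residual difficulty lies. Combining everything, $\cost(OPT_i) \le \sum_j \cost(\widehat{\TS}_j) \le \sum_j \cost(\TS^*_j) = \cost(OPT)$. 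One also verifies that $\vcalT'$ is a legitimate extended compression tree: its edges of $\calF_i$ keep their fixed orientations while only non-conflicting $\vcalT^*$-edges are appended, and since $\vcalT'$ is a tree with every edge oriented it does implement a restricted solution, which is all that the definition of $OPT_i$ demands.
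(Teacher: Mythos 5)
Your proof is correct and follows essentially the same route as the paper's: the paper superimposes $\vcalT^*$ on $\calF_i$, breaks cycles by deleting only edges belonging to $OPT$, and observes that $OPT$'s data movement already implements the surviving treestars. The only cosmetic difference is that the paper retains \emph{all} of $OPT$'s raw data movement rather than shrinking each treestar's destination set, thereby sidestepping the monotonicity check of $c(\cdot)$ that you correctly flag as the one delicate step.
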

We are now ready to prove our main theorem.

\begin{Theorem}
\label{thm:msti}
Assuming we can compute an $\alpha$ approximation of the most cost-effective treestar
and the bounded conditional entropy parameter is $\beta$,
there is a $2\alpha\beta^2 H_n$ approximate restricted solution.
\end{Theorem}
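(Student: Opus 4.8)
The plan is to run a greedy set-cover style accounting on top of the three framework lemmas, turning the per-iteration cost-effectiveness guarantee into a global $H_n$ factor via a telescoping argument. Since the algorithm outputs a restricted solution (the extended compression tree $\vcalT$, whose implementation cost is $\cost(\calT)$) and since $OPT$ in the analysis framework is by definition the optimal restricted solution, it suffices to bound $\cost(\calT)$ against $\cost(OPT)$; the comparison to the true optimum $\cost(\opt)$ would then follow from the restricted-solution lemma ($\cost(\ropt)\le(2+\beta)\cost(\opt)$) if desired.

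First I would apply the treestar decomposition lemma to write $\cost(\calT)\le \beta\sum_{i=1}^{\ell}\cost(\TS_i)$. This produces the first factor of $\beta$: it pays for reorienting each edge of the final tree so that the greedily chosen treestars (which assume the center ships raw data to its leaves) actually implement $\vcalT$, using the bounded-conditional-entropy inequality $H(X|Y)\le \beta\,H(Y|X)$. Next, I would rewrite each term as $\cost(\TS_i)=m_i\cdot\frac{\cost(\TS_i)}{m_i}$ and invoke the cost-effectiveness lemma together with $\cost(OPT_i)\le\cost(OPT)$. Here the second factor of $\beta$ enters: when the optimal completion $OPT_i$ is decomposed into edge-disjoint treestars $\TS_j^*$ evaluated under the fixed center-ships-to-leaves cost convention, the orientation can disagree with the optimal data movement on the edge joining a star's center to its parent, so $\sum_j\cost(\TS_j^*)\le \beta\,\cost(OPT_i)$ rather than an exact inequality. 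Combining these gives $\cost(\TS_i)\le \alpha\beta\,\cost(OPT)\,\frac{m_i}{n_i}$, and hence $\cost(\calT)\le \alpha\beta^2\,\cost(OPT)\sum_{i=1}^{\ell}\frac{m_i}{n_i}$.

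The crux is then bounding the sum $\sum_i \frac{m_i}{n_i}$. Using the recurrence $n_{i+1}=n_i-m_i+1$, I would substitute $m_i=(n_i-n_{i+1})+1$ and split into $\sum_i \frac{n_i-n_{i+1}}{n_i}+\sum_i \frac{1}{n_i}$. For the first piece, each of the $n_i-n_{i+1}$ unit numerators divided by $n_i$ is at most $\sum_{k=n_{i+1}+1}^{n_i}\frac{1}{k}=H_{n_i}-H_{n_{i+1}}$, which telescopes to at most $H_{n_1}\le H_n$. For the second piece, the $n_i$ are strictly decreasing positive integers (every treestar has at least one leaf-tree, so $m_i\ge 2$), hence $\sum_i \frac{1}{n_i}\le H_n$. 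Thus $\sum_i \frac{m_i}{n_i}\le 2H_n$, and combining yields $\cost(\calT)\le 2\alpha\beta^2 H_n\cdot\cost(OPT)$, exactly the claimed bound.

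The main obstacle is not the algebra but justifying the two reorientation factors of $\beta$ cleanly: one must verify that, under the fixed orientation convention in the definition of $\cost(\TS)$, decomposing the optimal extended tree into the edge-disjoint stars $\TS_j^*$ overcounts $\cost(OPT_i)$ by at most $\beta$, and symmetrically that re-orienting the output tree's edges in the final data-movement step costs at most another factor of $\beta$ (both via $H(X|Y)\le\beta\,H(Y|X)$). The telescoping estimate is routine once the recurrence $n_{i+1}=n_i-m_i+1$ and the bound $m_i\ge 2$ are in place.
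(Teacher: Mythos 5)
Your proposal is correct and follows essentially the same route as the paper: the chain $\cost(\calT)\le\beta\sum_i\cost(\TS_i)\le\alpha\beta^2\,\cost(OPT)\sum_i m_i/n_i\le 2\alpha\beta^2H_n\,\cost(OPT)$, with the $2H_n$ bound obtained by the same telescoping use of $n_{i+1}=n_i-m_i+1$ and $m_i\ge 2$ (you split $m_i=(n_i-n_{i+1})+1$ where the paper uses $m_i\le 2(m_i-1)$, an immaterial difference). Your explicit attribution of the second $\beta$ to the orientation mismatch when decomposing $OPT_i$ into edge-disjoint stars is in fact cleaner than the paper's presentation, whose cost-effectiveness lemma is stated with factor $\alpha$ alone while the theorem's displayed inequality silently uses $\alpha\beta$.
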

\omitproofs{
\begin{proof}
The cost of our solution $SOL$ is:
{\small
\begin{align*}
& \cost(SOL)\leq \beta\sum_{i=1}^\ell \cost(\TS_i)\leq \beta\sum_{i=1}^\ell \alpha\beta{\cost(OPT_i)m_i\over n_i} \\
          & \,\,\,\,\,\,\,\leq \alpha\beta^2\cdot \cost(OPT) \cdot \sum_{i=1}^\ell {m_i\over n_i}
          \leq 2\alpha\beta^2 H_n \cost(OPT)
\end{align*}
}
The last inequality holds since: $\sum_{i=1}^\ell {m_i\over n_i} \leq 2\sum_{i=1}^\ell {m_i-1\over n_i} \leq 2H_n$.
\qed
\end{proof}
}

\subsection{The WL-SG Model}
\label{subsec_wlsg}
\label{treestar}
We first specify the cost function $c(r, \{v_j\}_{j\in S})$ in the wireless sensor network model
where we require the compression tree to be a subgraph of the communication graph
and then give a polynomial time algorithm for
finding the most cost-effective treestar.

Recall $c(r, \{v_j\}_{j\in S})$ is cost of sending $X_r$ from $r$ to all $v_j$'s.
It is easy to see $c(r,\{v_j\}_{j\in S})=H(X_r)$ since we require $v_j$ to be adjacent to $r$
and a single broadcast of $X_r$ from $r$ can accomplish the communication.
The most cost-effective treestar can be computed as follows:
We fix a node $r$ as the center to which all
leaf-edges will connect.
Assume $T_1,T_2,\ldots$ are sorted in a non-increasing order of
$h(r,T_j)=\min_{v\in T_j\cap N(r)}H(X_{v}|X_r)\dist(v,BS)$.
$h(r,T_j)$ captures the minimum cost of sending
the data of some node in $T_j\cap N(r)$ conditioned on $X_r$ to the base station.
The most cost-effective treestar is determined simply by
$$
\min_{k}\left\{H(X_r)+\sum_{j=1}^k h(r,T_j) \over k+1\right\}.
$$

We briefly analyze the running time of the algorithm.
In the
pre-processing step, we need to compute $\dist(v,BS)$ for all $v$ by running the single source shortest path
algorithm which takes $O(n^2)$ time.
In each iteration, for each candidate center $r$, sorting $h(r,T_j)$s needs
$O(\deg(r)\log \deg(r))$ time.
So, the most-effective treestar can be found in $O(|E|\log n)$ time.
Since in each iteration, we merge at least two tree components, hence there are
at most $n$ iterations.
Therefore, the total running time is $O(n|E|\log n)$.
Using Lemma \ref{lm_wire_restrict} and Theorem \ref{thm:msti}, we obtain the following.
\begin{Theorem}
\label{thm:wlsg}
We can compute a $4\beta^2 H_n$-approximation
for the WL-SG model in $O(n|E|\log n)$ time.
\end{Theorem}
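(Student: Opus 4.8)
The plan is to instantiate the generic greedy framework of Algorithm~\ref{alg_generic} for the WL-SG cost model and then chain together the two results already in hand: Theorem~\ref{thm:msti} and Lemma~\ref{lm_wire_restrict}. Since Theorem~\ref{thm:msti} already reduces the approximation guarantee to the question of how well the \emph{most cost-effective treestar} subproblem can be solved, the only genuinely new work is to show that for WL-SG this subproblem can be solved \emph{exactly} in polynomial time, i.e., that $\alpha = 1$. Everything else is then a matter of substituting $\alpha = 1$ and accounting for the restricted-solution gap.

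First I would pin down the cost function. In the WL-SG model every leaf connection node $v_j$ must be a neighbor of the center $r$, so a single broadcast of $X_r$ from $r$ reaches all of the $v_j$ at once; hence $c(r,\{v_j\}_{j\in S}) = H(X_r)$, independent of how many leaf-trees are attached. This is the crucial structural fact: the intra-source term is a fixed constant once $r$ is chosen, while the necessary-communication term is additive over the chosen leaf-trees, each contributing $h(r,T_j) = \min_{v\in T_j\cap N(r)} H(X_v|X_r)\dist(v,BS)$. To find the most cost-effective treestar I would fix the center $r$, sort the candidate leaf-trees in non-decreasing order of $h(r,T_j)$, and observe that for any target number $k$ of leaves the cost is minimized by taking the $k$ cheapest connections. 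It then suffices to evaluate $\bigl(H(X_r)+\sum_{j=1}^k h(r,T_j)\bigr)/(k+1)$ over all prefix lengths $k$ and all centers $r$, and return the minimizer. Because the objective is separable and the $IC$ term is constant, this prefix-greedy choice is provably optimal, giving $\alpha = 1$.

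With $\alpha = 1$ in hand, Theorem~\ref{thm:msti} yields a solution $SOL$ with $\cost(SOL) \le 2\beta^2 H_n \cdot \cost(\ropt)$, i.e., within $2\beta^2 H_n$ of the optimal \emph{restricted} solution. I would then invoke the WL-SG-specific half of Lemma~\ref{lm_wire_restrict}, namely $\cost(\ropt) \le 2\cost(\opt)$, to conclude $\cost(SOL) \le 4\beta^2 H_n \cdot \cost(\opt)$, which is the claimed ratio. For the running time, computing $\dist(v,BS)$ for all $v$ by a single-source shortest-path run costs $O(n^2)$ in pre-processing; in each iteration, sorting the $h(r,T_j)$ values over all candidate centers costs $\sum_r O(\deg(r)\log\deg(r)) = O(|E|\log n)$, and since every iteration merges at least two tree components there are at most $n$ iterations, for a total of $O(n|E|\log n)$. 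The main obstacle I expect is the exactness argument for the treestar subproblem: one must argue carefully that the prefix-greedy selection is globally optimal rather than merely a good heuristic, and this rests entirely on the separability of the cost and the constancy of the $IC$ term in the broadcast model. Once $\alpha=1$ is secured, the remaining steps are direct applications of the prior lemmas.
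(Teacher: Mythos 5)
Your proposal matches the paper's own argument essentially step for step: the observation that a single broadcast gives $c(r,\{v_j\}_{j\in S})=H(X_r)$, the exact prefix-greedy computation of the most cost-effective treestar (so $\alpha=1$), the chaining of Theorem~\ref{thm:msti} with the WL-SG case of Lemma~\ref{lm_wire_restrict} to get $4\beta^2 H_n$, and the identical running-time accounting. The only discrepancy is that the paper writes ``non-increasing'' order for the $h(r,T_j)$ where your ``non-decreasing'' is the correct choice for the prefix minimization, so your version is if anything a small correction of a typo rather than a deviation.
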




\subsection{The WL-NS Model}
\label{subsec_bc_notsubgraph}
\label{subsec_wlns}


Here we don't put any restrictions on the compression trees. Thus,
a source node is able to send the message to a set of nodes through
a Steiner tree and
the cost for sending one bit is the sum of the weights of all inner nodes
of the Steiner tree (due to the broadcasting nature of wireless networks).
In graph theoretic terminology, it
is the cost of the connected dominating set that includes the source node
and dominates all terminals.
Formally, the cost of the treestar $\TS$
with node $r$ as the center and $S$ be the set of indices of the leaf-trees
is defined to be: \\
{
\begin{minipage}{3.4in}
\small
$$
\min_{v_j\in T_j}\left(Cds(r,\{v_j\}_{j\in S})H(X_r)+\sum_{j\in S} H(X_{v_j}|X_r)\dist(v_j,BS)\right)
$$
\end{minipage}
}
where $Cds()$ is the minimum connected set dominating all nodes in its argument.



Next, we discuss how to find the most effective treestar.
We reduce the problem to the following version of
the directed steiner tree problem \cite{journals/99_directsteiner}.
\begin{Definition}
Given a weighted directed graph
$G$, a specified root $r\in V(G)$, an integer $k$ and a set
$X\subseteq V$ of terminals,
the {\em D-Steiner($k,r,X$)} problem asks for a minimum weight
directed tree rooted at $r$ that can reach any $k$ terminals in $X$.
\end{Definition}
It has been shown that the D-Steiner($k,r,X$) problem
can be approximated  within a factor of $O(n^\epsilon)$ for any
fixed $\epsilon>0$ within time $O(n^{O({1\over \epsilon})})$ \cite{journals/99_directsteiner}.

The reduction is as follows. We first fix the center $r$.
Then, we create a undirected node-weighted graph $D$.
The weight of each node is $H(X_r)$.
For each node $v$, we create a copy $v'$ with weight
$w(v') = H(X_v|X_r) \dist(v,BS)$
and add an edge $(u,v')$ for each $u\in \bar{N}(v)$.
For each tree component $T_j$, we create a group
$g_j=\{v'|v\in T_j\}$.
Then, we construct the directed edge-weighted graph .
We replace each undirected edge with two directed edges of opposite directions.
For each group $g_i$, we add one node $t_i$ and edges $(v,t_i)$ for all $v\in g_i$.
The following standard trick will transfer the weight on nodes to directed edges.
For each vertex $v\in V(D)$, we replace it with a directed edge $(v',v'')$ with the same weight
as $w(v)$ such that $v'$ absorbs all incoming edges of $v$ and $v''$ takes all outgoing edges of $v$.
We let all $t''_i$s be the terminals we want to connect.
It is easy to see a directed steiner tree connecting $k$ terminals in the new directed graph
corresponds exactly to a treestar with $k$ leaf-trees.

\begin{Theorem}
\label{thm:wlns}We develop an $O(\beta^3 n^\epsilon\log n)$-approximation
for the WL-NS model for any fixed constant $\epsilon>0$ in $O(n^{O({1\over \epsilon})})$ time.
\end{Theorem}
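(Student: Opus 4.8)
The plan is to instantiate the generic greedy framework of Section \ref{subsec_framework} for the WL-NS model, so that the entire burden reduces to two ingredients: (i) exhibiting an $\alpha$-approximation for the most cost-effective treestar with $\alpha = O(n^\epsilon)$, and (ii) invoking Theorem \ref{thm:msti} together with Lemma \ref{lm_wire_restrict}. Since $H_n = \Theta(\log n)$ and $2+\beta = O(\beta)$, plugging $\alpha = O(n^\epsilon)$ into the bound $2\alpha\beta^2 H_n$ of Theorem \ref{thm:msti}, and then paying the $(2+\beta)$ factor of Lemma \ref{lm_wire_restrict} to pass from the optimal restricted solution to $\opt$, yields exactly $O(\beta^3 n^\epsilon \log n)$. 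The whole proof therefore rests on step (i), with the running time inherited from the D-Steiner subroutine.

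For step (i), I would first recall that in WL-NS the treestar cost is $Cds(r,\{v_j\}_{j\in S})H(X_r) + \sum_{j\in S} H(X_{v_j}|X_r)\dist(v_j,BS)$, where the first term charges the broadcast of $X_r$ along a connected structure dominating the chosen leaves, and the second charges the conditional data movement. The reduction described above encodes these two cost components as node weights ($H(X_r)$ on the broadcast nodes, and $H(X_v|X_r)\dist(v,BS)$ on the copies $v'$), moves all node weights onto directed edges via the standard vertex-splitting trick, and installs one group-terminal $t''_j$ per leaf-tree $T_j$. The crucial structural claim to verify is that a directed Steiner tree rooted at $r$ reaching exactly $k$ of the terminals $t''_j$ is in cost-preserving correspondence with a treestar centered at $r$ having $k$ leaf-trees. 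This is where I expect the main effort to lie: one must check that the connectivity of the broadcast structure and the ``activation'' edges $(u,v')$ for $u\in\bar{N}(v)$ together enforce exactly that every charged leaf actually receives $X_r$ (noting that $\bar{N}(v)$ includes $v$ itself, so $v$ may be its own broadcaster).

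To convert the D-Steiner guarantee into a bound on the \emph{ratio} $\cost(\TS)/(k+1)$, I would enumerate: for every candidate center $r$ and every target size $k\in\{1,\dots,n_i\}$, run the $O(n^\epsilon)$-approximation for D-Steiner$(k,r,X)$ of \cite{journals/99_directsteiner} and record the ratio (returned tree cost)$/(k+1)$, finally outputting the treestar of smallest recorded ratio. The $\alpha$-bound is then immediate: if the genuinely most cost-effective treestar has center $r^*$, $k^*$ leaf-trees, and cost $C^*$, then D-Steiner$(k^*,r^*,X)$ returns a tree of cost at most $O(n^\epsilon)\,C^*$ reaching $k^*$ terminals, so the recorded ratio for $(r^*,k^*)$ is at most $O(n^\epsilon)\,C^*/(k^*+1) = O(n^\epsilon)\,ceff(\TS^*)$, giving $\alpha = O(n^\epsilon)$.

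Finally, the running time is dominated by the D-Steiner calls: each costs $O(n^{O(1/\epsilon)})$, and we make at most $O(n^2)$ of them per iteration (over all $r$ and $k$) across at most $n$ iterations, so the total remains $O(n^{O(1/\epsilon)})$. Combining $\alpha = O(n^\epsilon)$ with Theorem \ref{thm:msti} and Lemma \ref{lm_wire_restrict} as above completes the argument. The one place demanding genuine care is the cost-preserving correspondence of the reduction, in particular confirming that the broadcast term is modeled as a \emph{connected} dominating structure so that the directed-Steiner connectivity faithfully reflects a feasible wireless broadcast of $X_r$.
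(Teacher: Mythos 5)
Your proposal is correct and follows essentially the same route as the paper: reduce the most cost-effective treestar computation to the directed Steiner tree problem via the node-weight/vertex-splitting construction with one group-terminal per leaf-tree, obtain $\alpha = O(n^\epsilon)$ from the D-Steiner approximation of \cite{journals/99_directsteiner}, and combine with Theorem \ref{thm:msti} and Lemma \ref{lm_wire_restrict} to get $O(\beta^3 n^\epsilon \log n)$. Your added details (enumerating over all centers $r$ and all target sizes $k$, and the ratio argument converting the D-Steiner guarantee into a bound on $\cost(\TS)/(k+1)$) are exactly what the paper leaves implicit, and they are filled in correctly.
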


\choosesecondifkeepwired{} {
\subsection{The Multicast-NS Model}
\label{subsec_wire_notsubgraph}

We consider the wired network model and do not require the
compression tree to be a subgraph of $\calG_C$.
First, we need to provide the concrete form of the cost function $c(r, \{v_j\}_{j\in S})$
i.e., the cost of sending one unit of data from $r$ to all $v_j$'s.
In this model, it is easy to see the cost is
the minimum Steiner tree connecting $r$ and all $v_j$'s.

Suppose node $r$ is the center of the treestar $\TS$ and $S$
is the set of the indices of leaf-trees.
According to the communication model and the general cost definition,
the cost of $\TS$ here is defined as: \\
{
\begin{minipage}{3.4in}
\small
$$
\min_{v_j\in T_j}\left(Stn(r,\{v_j\}_{j\in S})H(X_r)+\sum_{j\in S} H(X_{v_j}|X_r)\dist(v_j,BS)\right)
$$
\end{minipage}
}
where $Stn()$ is the minimum steiner tree connecting all nodes in its argument.

Next we show how to find the most cost-effective treestar.
We first fix the center $r$.
Basically, our task is to find a set $S$ of tree
components such that $ {\cost(\TS)\over k+1} $ is minimized.
We will convert this problem to a variant of the group steiner tree problem.
Actually, the following min-density variant
has been considered in order to solve the general group steiner tree problem \cite{journals/06_groupsteiner2}.
\begin{Definition}
Given an undirected graph $G$ and a collection of vertex subsets $\{g_i\}$, find
a tree $T$ in $G$ such that
$\cost(T)\over |\{g_i| g_i\cap T\ne \emptyset\}|$
is minimized.
\end{Definition}

Our reduction to the min-density group Steiner problem works as follows.
For each node $v$, we create a copy $v'$ and add
an edge $(v,v')$ with weight $H(X_v|X_r) \dist(v,BS)$.
For each tree component $T_j$, we define a group $g_j=\{v'|v \in T_j\}$.
It is easy to see the cost of the Steiner tree spanning a set of groups
is exactly the cost of the corresponding treestar.


The min-density group steiner problem can be approximated within a factor of
$O({2\over \epsilon}(\log n)^{2+\epsilon})$
for any constant $\epsilon>0$\cite{journals/06_groupsteiner2}.
The running time is $O(n^{O({1\over\epsilon})})$.
By plugging this result into our greedy framework
and Lemma \ref{lm_wire_restrict}, we obtain the following theorem.

\begin{Theorem}
\label{thm:wnns}There is an algorithm with an approximation factor of
$O({\beta^3\over \epsilon}(\log n)^{3+\epsilon})$
for the WN-NS model for any fixed constant $\epsilon>0$ in
$O(n^{O({1\over \epsilon})})$ time.
\end{Theorem}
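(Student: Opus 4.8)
The plan is to assemble the result from three ingredients that are already in place above: the reduction of \emph{Mce-Treestar} to the min-density group Steiner problem, the generic bound of Theorem~\ref{thm:msti}, and the restriction bound of Lemma~\ref{lm_wire_restrict}. First I would verify that the reduction sketched above is cost-preserving: for a fixed center $r$, any tree $T$ in the constructed graph that meets $k$ of the groups $g_j$ corresponds to a treestar with those $k$ leaf-trees and cost equal to $\cost(T)$, and conversely. The forward direction holds because hitting a group $g_j$ forces $T$ to use exactly one copy-edge $(v,v')$ of weight $H(X_v|X_r)\dist(v,BS)$ together with a sub-Steiner-tree from $r$ to $v$, whose total weight reproduces the raw-data term $Stn(r,\{v_j\})H(X_r)$ while each copy-edge reproduces one conditional term; summing over the hit groups yields exactly the two terms in the definition of $\cost(\TS)$. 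Hence minimizing the density $\cost(T)/|\{j : g_j\cap T\neq\emptyset\}|$ is, up to the harmless difference between hitting $k$ groups and the $k+1$ tree components in $ceff(\TS)$, the same as minimizing $ceff(\TS)$.

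Second, I would invoke the $O(\frac{2}{\epsilon}(\log n)^{2+\epsilon})$-approximation for the min-density group Steiner problem~\cite{journals/06_groupsteiner2}, running it once for each of the $n$ candidate centers $r$ and keeping the best output. This produces an $\alpha$-approximate most cost-effective treestar with $\alpha=O(\frac{1}{\epsilon}(\log n)^{2+\epsilon})$, at cost $O(n^{O(1/\epsilon)})$ per invocation. Feeding this $\alpha$ into Theorem~\ref{thm:msti} gives a restricted solution of cost at most $2\alpha\beta^2 H_n\cdot\cost(\ropt)$; using $H_n=O(\log n)$, this is $O(\frac{\beta^2}{\epsilon}(\log n)^{3+\epsilon})\cdot\cost(\ropt)$.

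Third, I would close the gap to the true optimum via Lemma~\ref{lm_wire_restrict}, which gives $\cost(\ropt)\le(2+\beta)\cost(\opt)=O(\beta)\cdot\cost(\opt)$ in the wired model. Composing the two bounds yields an overall approximation factor of $O(\beta)\cdot O(\frac{\beta^2}{\epsilon}(\log n)^{3+\epsilon})=O(\frac{\beta^3}{\epsilon}(\log n)^{3+\epsilon})$, as claimed. For the running time, the outer greedy loop of Algorithm~\ref{alg_generic} runs for at most $n$ iterations (each merges at least two tree components), and each iteration calls \emph{Mce-Treestar} at cost $O(n^{O(1/\epsilon)})$; the polynomial overhead is absorbed, leaving total time $O(n^{O(1/\epsilon)})$.

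The main obstacle I anticipate is the first step: pinning down the reduction so that the density objective of the group Steiner instance matches the treestar cost-effectiveness closely enough that the $O((\log n)^{2+\epsilon})$ density guarantee transfers as a genuine per-iteration $\alpha$-approximation for \emph{Mce-Treestar}. In particular one must check that the copy-edge gadget charges each conditional term $H(X_{v_j}|X_r)\dist(v_j,BS)$ exactly once per leaf-tree while charging the shared Steiner backbone carrying $X_r$ only once, and that restricting to trees containing the fixed center $r$ (together with the $k$-versus-$(k+1)$ slack, which costs only a constant factor since $ceff(\TS)=\frac{k}{k+1}$ times the density) does not degrade the approximation by more than a constant. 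Everything downstream is then a mechanical composition of Theorem~\ref{thm:msti} and Lemma~\ref{lm_wire_restrict}.
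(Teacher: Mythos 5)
Your proposal is correct and follows essentially the same route as the paper: fix each candidate center $r$, reduce \emph{Mce-Treestar} to the min-density group Steiner problem via the copy-edge gadget, invoke the $O(\frac{2}{\epsilon}(\log n)^{2+\epsilon})$ density approximation, and compose with Theorem~\ref{thm:msti} and Lemma~\ref{lm_wire_restrict} to accumulate the $H_n$ and $(2+\beta)$ factors. Your added care about the $k$-versus-$(k+1)$ slack and about charging the Steiner backbone once (which requires the base edges to carry weight scaled by $H(X_r)$) only makes explicit details the paper leaves implicit.
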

}

\choosesecondifkeepwired{}{
\section{The Unicast Model: Poly-Time Algorithm for Restricted Solutions}
\label{sec_unicast}

We present a polynomial time algorithm for computing
the optimal restricted solution under the
unicast communication model,
giving us a $(2+\beta)$-approximation
by Lemma \ref{lm_wire_restrict}. Further we can show that the algorithm will produce an optimal
solution under the uniform entropy and conditional entropy assumption.

\begin{Lemma}
\label{lm_restricted}
For unicast model with uniform entropy and conditional entropy assumption, there is always an optimal solution of the restricted form.
\end{Lemma}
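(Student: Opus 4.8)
The plan is to start from an arbitrary optimal (unrestricted) solution---a compression tree $\calT$, its raw-data movements, and its compression locations $I_{i,p(i)}$---and rewrite it as a restricted solution without increasing the cost. The one fact I would lean on throughout is the uniform conditional entropy assumption: since $H(X_a\mid X_b)=\epsilon$ for every pair, the cost of a conditional message depends only on $\dist(\,\cdot\,,BS)$ of the node where it is computed, and \emph{not} on the identity of its parent. In particular, at any node $z$ that simultaneously holds two raw values $X_a$ and $X_b$, I may compress either one using the other, in either direction, at the identical cost $\epsilon\,\dist(z,BS)$. Writing $\cost(\opt)$ as (raw-movement cost) $+\ \epsilon\sum_{i\neq r}\dist(I_{i,p(i)},BS)$, this gives me the freedom to re-parent nodes freely, as long as the raw values required for each compression are co-located.

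Next I would build a candidate restricted solution by reusing the existing raw movements. For each non-root $i$ the raw value $X_i$ is already delivered to $w_i:=I_{i,p(i)}$, where $X_{w_i}$ is of course native; so I can declare $w_i$ to be the new neighbor of $i$ and compress the pair at the endpoint $w_i$, at exactly the old cost $\epsilon\,\dist(w_i,BS)$. Because this edge only needs $X_i$ at the single destination $w_i$ rather than over the whole original footprint $T_i$, the raw cost can only drop, so the candidate has cost at most $\cost(\opt)$ and the conditional cost is unchanged. The map $i\mapsto w_i$ is a functional graph; if it happens to be a single tree rooted at $r$, it already is a restricted solution and I am done.

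The crux is that $i\mapsto w_i$ may contain cycles: every weakly connected component not containing $r$ has out-degree one at each node, hence exactly one directed cycle. To turn the functional graph into a spanning tree I would break one cycle edge of each such component and reconnect it to the rest using an additional co-location edge. The connectivity of the original tree $\calT$ is what guarantees such reconnections exist: some edge of $\calT$ crosses the boundary of a cyclic component $K$, and since both its endpoints' raw values reach their common meeting point $w$ in $\opt$, the endpoint outside $K$ has its raw value co-located with $X_w$, yielding a restricted reconnection edge (compressed at $w$) that links $K$ to the outside at cost $\epsilon\,\dist(w,BS)$. The goal is to pair each reconnection with the deletion of a cycle edge of equal or larger conditional cost, so that the swap does not increase $\epsilon\sum_i\dist(I_{i,p(i)},BS)$ while the raw cost only decreases. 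Making these swaps simultaneously cost-non-increasing, genuinely cycle-breaking (rather than removing a pendant edge), and globally acyclic across all components is the step I expect to be the main obstacle; it is precisely where I would exploit the orientation-freedom from uniform conditional entropy---each surviving edge and each reconnection keeps its compression fixed at its co-location node irrespective of how the final tree is rooted---together with a careful per-component choice of which crossing $\calT$-edge and which cycle edge to use. Once this is established, the resulting restricted solution has cost at most $\cost(\opt)$, which combined with the trivial inequality $\cost(\opt)\le\cost(\ropt)$ forces $\cost(\ropt)=\cost(\opt)$ and hence the existence of an optimal restricted solution.
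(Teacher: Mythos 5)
Your setup is sound and shares the paper's key observation --- under the uniformity assumption the cost of a conditional transmission depends only on $\dist(I_{i,p(i)},BS)$, and since both $X_i$ and $X_{p(i)}$ are already delivered to $w_i:=I_{i,p(i)}$, the pair $(w_i,i)$ can be compressed at $w_i$ for free, with the raw cost only shrinking. But the step you yourself flag as ``the main obstacle'' is a genuine gap, not a detail: your exchange argument needs, for each unicyclic component $K$ of the functional graph $i\mapsto w_i$, a reconnection edge whose cost $\epsilon\,\dist(w,BS)$ is dominated by the cost $\epsilon\,\dist(w_{u},BS)$ of some deletable \emph{cycle} edge, and nothing you have established provides such a comparison. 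The natural candidates do not line up: taking the minimum-$\calT$-depth node $a^*\in K$ guarantees $p(a^*)\notin K$ and yields a reconnection edge $\{p(a^*),w_{a^*}\}$ of cost equal to that of the functional edge $\{a^*,w_{a^*}\}$, but if $a^*$ is not on the cycle, deleting $\{a^*,w_{a^*}\}$ severs a pendant subtree while leaving the cycle intact, and deleting a genuine cycle edge instead requires $\dist(w_{a^*},BS)\le\dist(w_u,BS)$ for some cycle node $u$, which can fail. There are also meeting points $w$ of crossing $\calT$-edges that land outside $K$ altogether, so not every crossing edge even yields a $K$-to-outside reconnection. As written, the construction can strictly increase the conditional cost, so optimality is not preserved.

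The paper avoids the global functional graph entirely and eliminates bad edges one at a time, which is the ingredient you are missing. For a bad edge $(u,v)$ with $X_v|X_u$ computed at $w\notin\{u,v\}$: if $w$ is not in the subtree of $v$, replace $(u,v)$ by $(w,v)$ and compute $X_v|X_w$ at $w$ (same location, same entropy $\epsilon$, so same cost, and no cycle is created); if $w$ \emph{is} in the subtree of $v$, replace $(u,v)$ by $(u,w)$, compute $X_w|X_u$ at $w$ (legal since $X_u$ already reaches $w$), and \emph{reverse the compression direction along the tree path from $v$ to $w$} --- each reversal swaps $X_y|X_x$ for $X_x|X_y$ at the same physical node, which is free precisely because the conditional entropies are uniform. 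This local re-rooting is what lets you re-parent toward the computation node without ever creating a cycle or needing to compare distances to the base station; with it, each step reduces the number of bad edges by one at unchanged cost, and the lemma follows. I would recommend restructuring your argument around this per-edge repair rather than trying to patch the cycles of the global re-parenting map.
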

\begin{proof}
We prove the lemma by modifying an optimal compression tree (and the associated
data movement scheme) to a restricted solution without increasing the cost.
Suppose $\calT$ is an optimal compression tree.
We repeatedly process the following types of edges until none is left.
Take an edge $(u,v)\in \calT$ such that $X_v|X_u$ is computed neither on node $u$ nor node $v$ (we call it a {\em bad edge}).
Assume it is computed on node $w$.
Note that we will never change raw data movement flow.
We distinguish two cases:
\begin{enumerate}
\item $w$ is not in the subtree rooted at $v$.
The new compression tree $\calT'$ is formed by deleting $(u,v)$ from $\calT$ and add $(w,v)$ to it.
Instead of $X_v|X_u$,
$X_v|X_w$ is computed on $w$. Everything else is kept unchanged.
It is not hard to see $\calT'$ is a valid compression tree and
the new data movement scheme implements it.

\item $w$ is in the subtree rooted at $v$.
In this case, we delete $(u,v)$ from $\calT$ and add $(u,w)$
to obtain the new compression tree $\calT'$.
Accordingly, all edges in the path from $v$ to $w$ need to change their directions.
The data movement scheme is modified as follows.
Instead of sending $X_v|X_u$
we send $X_w|X_u$ from $w$ ($w$ has $X_u$).
For each edge $(x,y)$ in the path from $v$ to $w$ in $\calT$,
we send $X_y|X_x$ instead of $X_y|X_x$ (at the same location).
This modification corresponds to the change of the direction of $(x,y)$.
It is easy to see these modifications don't change the cost.
\end{enumerate}

It is not hard to see $\calT'$ is also an valid compression tree with one less bad edge in either case.
Therefore, repeating the above process generates a restricted solution with the same cost.
\qed
\end{proof}


The same problem was previously considered by Cristescu et al.~\cite{cristescu,CBV2004}, who
also propose an approach that uses only second-order distributions (and makes the uniform
entropy and conditional entropy assumption). They develop a $2(2+\sqrt{2})$-approximation
for the problem. However, the solution space we consider in this paper
is larger than the one they consider, in that it allows more freedom in choosing the compression
trees\footnote{Another subtle difference is that, they require all the communication to be along
only one routing tree; we don't require that from our solutions.}.

We note that our approach to find a optimal restricted solution
is essentially the same as the one used by Rickenbach and Wattenhofer \cite{Rickenbach04gatheringcorrelated}.
They also made use of the minimum weight (out-)arborescence algorithm to compute an optimal data collection scheme
under some conditions. Actually, it can be show that their solution space coincides with our {\em restricted solution} space, i.e.,
$X_i|X_j$ should be computed either at $i$ or $j$.

Due to the significant resemblance to \cite{Rickenbach04gatheringcorrelated},
we only briefly sketch our algorithm.
Consider a compression tree $\calT$, and an edge
$(u, v) \in \calT$ where $u$ is the parent of $v$ ($u$ and $v$ may not be adjacent in $\calG_C$).
By induction, we assume that the base station can restore the value of $X_u$ (using its parent).
To compress $X_v$ using the value of $X_u$, we have two options:
\begin{myenumerate}
\item Node $u$ sends the value of $X_u$ ($= x_u$) to $v$, $v$ compresses $X_v$
using the conditional distribution $Pr(X_v | X_u = x_u)$, and sends the
result to the base station.
The cost incurred is $H(X_u) \dist(u,v) + H(X_v|X_u) \dist(v,BS)$.
\item Node $v$ sends $X_v$ to $u$, $u$ compresses $X_v$ given its value of $X_u$, and transmits
the result to the base station $BS$. 
The cost incurred is $H(X_v)\dist(u,v)+H(X_v|X_u)\dist(u,BS)$.
\end{myenumerate}

We observe that the above choice has no impact on restoring information of any
other node and thus it can be made independently for each pair of nodes.

The discussion yields the following algorithm.
Construct a weighted directed graph $G$ with the same vertex set as $\calG_C$.
For each pair of vertices in the communication graph $\calG_C$,
we add {\em two} directed edges. The cost of the directed
edge $(u,v)$ is set to be:
$$ \mbox{\ \ }c(u,v)=\min\{H(X_u)\dist(u,v)+H(X_v|X_u)\dist(v,BS), H(X_v)\dist(u,v)+H(X_v|X_u)\dist(u,BS)\}$$
Similarly we add an edge $(v, u)$.
Essentially, $c(u,v)$ captures the minimal cost incurred in using $X_u$ to
compress $X_v$ (assuming a restricted solution).
Further, we add edges $(BS,v)$ from $BS$ to every node $v$ with cost
$c(BS,v)=H(X_v)\dist(v,BS)$.

We then compute a minimum weight (out-)arborescence $\calT$
(directed spanning tree) rooted at $BS$
which serves as our final compression tree \cite{GGST}. The actual data transmission plan
is easily constructed from the above discussion.
}

\section{Experimental Evaluation}
\label{sec:experiment}
We conducted a comprehensive simulation study over several datasets comparing
the performance of several approaches for data collection. Our results illustrate
that our algorithms can exploit the spatial correlations in the data effectively,
and perform comparably to the DSC lower bound.
\choosesecondifkeepwired{Below we present results over a few representative settings.}
{Due to space constraints, we present results only for the WL model (broadcast communication)
over a few representative settings.}

\topic{Comparison systems:}\\[2pt]
We compare the following data collection methods.
\begin{list}{--}{\leftmargin 0.25in \topsep 0pt \itemsep 0pt}
\item IND (Sec. \ref{sec:prior-approaches}): Each node compresses its data independently
    of the others.
\item Cluster (Sec. \ref{sec:prior-approaches}): The clusters are chosen using the greedy algorithm
        presented in Chu et al.~\cite{CDHH06} -- we start with each node being in its own cluster, and combine clusters
        greedily, till no improvement is observed.
\item DSC: the theoretical lower bound is plotted (Sec. \ref{sec:prior-approaches}).
\item TreeStar: Our algorithm, presented in Sec.~\ref{treestar}, augmented with a greedy local improvement step\footnote{After the TreeStar algorithm
finds a feasible solution, adding a few redundant local broadcasts can cause significant reduction
in the NC cost. We greedily add such local broadcasts till the solution stops improving.}.
\end{list}
\vspace{2pt}

\noindent{For} the TreeStar algorithm, we also show the NC cost (which measures how well the compression tree chosen by TreeStar
        approximates the original distribution).
This cost is lower bounded by the cost of DSC (which uses the best possible compression tree).

\begin{figure*}
    \vspace{-5pt}
    \centerline{\includegraphics[width=6.8in]{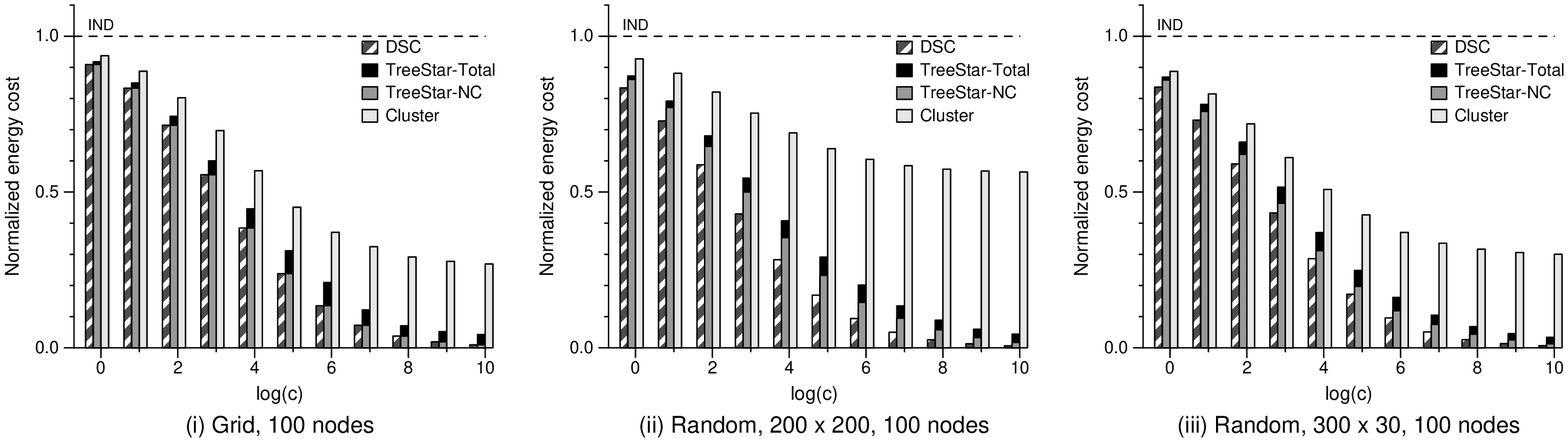}}
    \vspace{-5pt}
    \caption{Results of the experimental evaluation over the Rainfall data}
    \vspace{-5pt}
    \label{fig:rainfall-expt}
\end{figure*}

\topic{Rainfall Data:}\\[2pt]
For our first set of experiments, we use an analytical
expression of the entropy that was derived by Pattem et al.~\cite{PKG04} for a data set containing {\em precipitation} data collected in the
states of Washington and Oregon during 1949-1994~\cite{RainfallData}.
All the nodes have uniform entropy ($H(X_i) = h$), and the conditional entropies are given by: \\[2pt]
\centerline{$H(X_i | X_j) = (1 - \frac{c}{c + dist(i, j)}) h$}
where $dist(i, j)$ is the Euclidean distance between the sensors $i$ and $j$.
The parameter $c$ controls the correlation. For small values of $c$, $H(X_i | X_j) \approx h$ (indicating independence), but
as $c$ increases, the conditional entropy approaches 0.

Figure \ref{fig:rainfall-expt} shows the results for 3 synthetically generated sensor networks. We plot the
total communication cost for each of the above approaches normalized by the cost of IND. The first plot
shows the results for a 100-node network where the sensor nodes are arranged in a uniform grid.
Since the conditional entropies depend only on the distance, for any two adjacent nodes $i$, $j$,
$H(X_i | X_j)$ is constant. Because of this, TreeStar-NC is always equal to DSC in this case.
As we can see, the extra cost (of local broadcasts) is quite small, and overall TreeStar performs
much better than either Cluster or IND, and performs nearly as well as DSC.

We then ran experiments on randomly generated sensor networks, both containing 100 nodes each.
The nodes were randomly placed in either a 200x200 square or a 300x30 rectangle, and communication
links were added between nodes that were sufficiently close to each other ($distance < 30$).
For each plotted data point, we ran the algorithms on 10 randomly chosen networks, and
averaged the results.  As we can see in Figures \ref{fig:rainfall-expt} (ii) and (iii), the relative
performance of the algorithms is quite similar to the first experiment. Note that, because the
conditional entropies are not uniform, TreeStar-NC cost was typically somewhat higher than DSC. The
cost of local broadcasts for TreeStar was again relatively low.

\topic{Gaussian approximation to the Intel Lab Data:}\\[2pt]
For our second set of experiments, we used {\em multivariate Gaussian} models learned over the {\em temperature}
data collected at an indoor, 49-node deployment at the Intel Research Lab, Berkeley\footnote{
\url{http://db.csail.mit.edu/labdata/labdata.html}}.
Separate models were learned for each hour of day~\cite{DGMHH2004} and we show results for 6 of those. After
learning the Gaussian model, we use the {\em differential entropy} of these Gaussians for comparing
the data collection costs. We use the aggregated connectivity data available with the dataset to
simulate different connectivity behavior: in one case, we put communication links between nodes where
the success probability was $> .35$, resulting in somewhat sparse network, whereas in the other
case, we used a threshold of .20.

Figure \ref{fig:gaussian-expt} shows the comparative results for this dataset. The dataset does not exhibit
very strong spatial correlations: as we can see, optimal exploitation of the spatial correlations (using DSC) can only
result in at best a factor of 4 or 5 improvement over IND (which ignores the correlations). However, TreeStar
still performs very well compared to the lower bound on the data collection cost, and much better than
the Cluster approach. Different connectivity behavior does not affect the relative performance of the
algorithms much, with the low-connectivity network consistently incurring about twice as much total energy
cost compared to the high-connectivity network.

\begin{figure}
    \centerline{\includegraphics[width=3.5in]{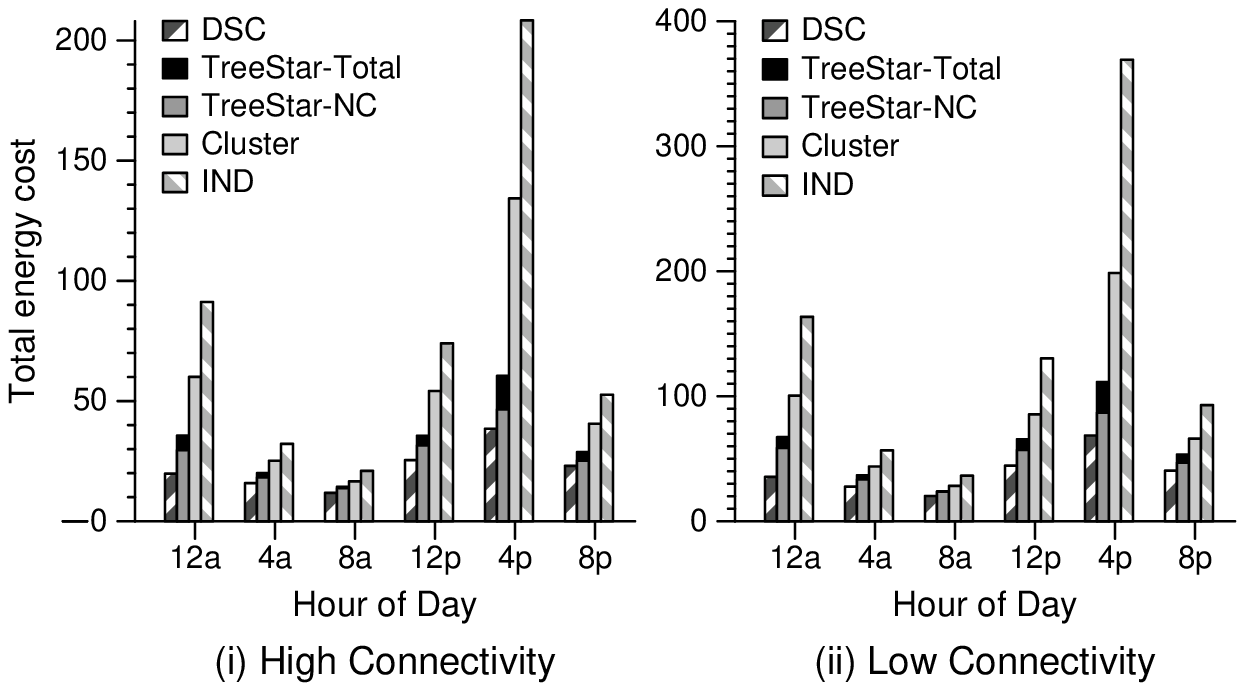}}
    \vspace{-5pt}
    \caption{Results for the Gaussian dataset}
    \label{fig:gaussian-expt}
\end{figure}

\section {Related Work}
\label{sec:relatedwork}
Wireless sensor networks have been a very active area of research in recent
years (see~\cite{akyildiz-survey} for a survey).
Due to space constraints, we only discuss some of the most closely related work
on data collection in sensor networks here.
Directed diffusion~\cite{DirDiffusion}, TinyDB~\cite{tinydb}, LEACH~\cite{leach} are some of
the general purpose data collection mechanisms that have been proposed in the literature. The focus of
that work has been on designing protocols and/or declarative interfaces to
collect data, and not on optimizing continuous data collection.
Aside from the works discussed earlier in the paper~\cite{PKG04,CDHH06,cristescu},
the BBQ system~\cite{DGMHH2004} also uses a
predictive modeling-based approach to collect data from a sensor network.
However, the BBQ system only provides probabilistic, approximate answers to
queries, without any guarantees on the correctness.
Scaglione and Servetto~\cite{SS02} also consider the
interdependence of routing and data compression, but the problem they focus on
(getting all data to all nodes) is different from the problem we address.
In seminal work, Gupta and Kumar~\cite{kumar1} proved that the transport
capacity of a random wireless network scales only as $O(\sqrt{n})$, where $n$ is
the number of sensor nodes. Although this seriously limits the scalability of
sensor networks in some domains, in the kinds of applications we are looking at,
the {\em bandwidth} or the {\em rate} is rarely the limiting factor; to be able
to last a long time, the sensor nodes are typically almost always in sleep mode.

Several approaches not based on predictive modeling have also been proposed for
data collection in sensor networks or distributed environments.
For example, constraint chaining~\cite{conch} is a suppression-based exact data collection
approach that monitors a minimal set of node and edge constraints to ensure
correct recovery of the values at the base station.


\vspace{-4pt}
\section{Conclusions}
\label{sec:conclusion}
Designing practical data collection protocols that can optimally exploit
the strong spatial correlations typically observed in a given sensor network
remains an open problem. In this paper, we considered this problem with the
restriction that the data collection protocol can only utilize second-order
marginal or conditional distributions. We analyzed the problem, and drew
strong connections to the previously studied weakly-connected
dominating set problem. This enabled us to develop a greedy framework for approximating
this problem under various different communication model or solution space settings.
Although we are not able to obtain constant factor approximations,
our empirical study showed that our approach performs very well compared to the
DSC lower bound.
We observe that the worst case for the problem appears to be when the conditional
entropies are close to zero, and that we can get better approximation bounds if
we lower-bound the conditional entropies.
Future research directions include generalizing our approach to
consider higher-order marginal and conditional distributions, and improving the
approximation bounds by incorporating lower bounds on the conditional entropy
values.
%

{
\vspace{-4pt}
\bibliographystyle{plain}
\small
\bibliography{all}
}
\end{document}